\documentclass[letterpaper]{article}
\usepackage{aaai}
\usepackage{times}
\usepackage{helvet}
\usepackage{courier}
\usepackage{url}
\usepackage{array}
\usepackage{amsthm}
\usepackage{amsmath}
\usepackage{amssymb}
\usepackage{hyperref}
\usepackage{enumitem}
\usepackage{booktabs}
\usepackage{multirow}
\usepackage{longtable}
\usepackage{algorithm}
\usepackage[table]{xcolor}
\usepackage{algpseudocodex}
\usepackage[most]{tcolorbox}

\newtheorem{remark}{Remark}
\newtheorem{proposition}{Proposition}
\DeclareMathOperator{\diag}{diag}
\algrenewcommand\algorithmicrequire{\textbf{Input:}}
\algrenewcommand\algorithmicensure{\textbf{Output:}}
\algrenewcommand\algorithmiccomment[1]{\hfill$\triangleright$~#1}
\def\MethodName{\textup{\textsc{REC-CBM}}}
\definecolor{RowGray}{gray}{0.93}
\tcbset{
    promptstyle/.style={
        colback=gray!5,
        colframe=black,
        fonttitle=\bfseries,
        boxrule=0.5mm,
        sharp corners,
        enhanced,
        breakable,
        width=\linewidth
    }
}

\begin{document}
%
\title{REC-CBM: Rubric-Aware Error-Correction Concept Bottleneck Models for Trustworthy Open-Ended Grading}
\author{AAAI Press\\
Association for the Advancement of Artificial Intelligence\\
2275 East Bayshore Road, Suite 160\\
Palo Alto, California 94303\\
}

\author{
    Chengshuai Zhao\textsuperscript{\rm 1}\thanks{Equal contribution.},
    Fan Zhang\textsuperscript{\rm 1}\footnotemark[1],
    Kumar Satvik Chaudhary\textsuperscript{\rm 1},
    Yiwen Li\textsuperscript{\rm 2},\\
    \Large\bf 
    Lo Pang-Yun Ting\textsuperscript{\rm 3},
    Ying-Chih Chen\textsuperscript{\rm 2},
    Huan Liu\textsuperscript{\rm 1}\\
    \textsuperscript{\rm 1}School of Computing and Augmented Intelligence, Arizona State University, USA\\
    \textsuperscript{\rm 2}Mary Lou Fulton Teachers College, Arizona State University, USA\\
    \textsuperscript{\rm 3}Department of Computer Science, National Yang Ming Chiao Tung University, TW\\
    {(czhao93; fzhan113; kchaud13; yiwenli2; ychen495; huanliu)}@asu.edu, lpyting@nycu.edu.tw\\
}

\maketitle
\begin{abstract}
Open-ended grading is central to equitable and personalized education, yet manual grading remains time-consuming and costly, underscoring the need for automated grading systems. Although recent neural and large language model (LLM) based systems have demonstrated superior performance, they are typically black-box models whose scoring processes and rationales are difficult for educators to verify and trust. Concept bottleneck models (CBMs) have emerged as a promising approach by routing predictions through human-interpretable concepts, providing a mechanistic guarantee of transparency. However, standard CBMs are not tailored to open-ended grading: they do not explicitly model fine-grained rubric dimensions, inadequately capture the ordinal semantics of scoring scales, and neglect inherent reliability issues in human concept annotations. To address these limitations, we propose \MethodName{}, a rubric-aware error-correction concept bottleneck model for trustworthy open-ended grading. \MethodName{} introduces a rubric-aware concept encoder that learns concept-specific representations over responses and an ordinal pairwise calibration objective that preserves ranking structure among rubric dimensions. It further incorporates a latent concept error-correction module that denoises concept predictions before final grade prediction while preserving interpretability. Comprehensive experiments on publicly available datasets show that \MethodName{} consistently improves grading performance and produces more faithful concept-level reasoning than both state-of-the-art black-box models and transparent baselines. Further analyses validate the contribution of each component and demonstrate the applicability in realistic educational settings. Overall, this work provides a practical, interpretable grading solution that enables educators to inspect, intervene in, and trust automated decisions, advancing more transparent and trustworthy education. Code is available:~\href{https://github.com/scott-f-zhang/REC-CBM}{https://github.com/scott-f-zhang/REC-CBM}. Data are available:~\href{https://huggingface.co/datasets/scott-f-zhang/REC-CBM}{https://huggingface.co/datasets/scott-f-zhang/REC-CBM}
\end{abstract}

\section{Introduction.}
Open-ended grading evaluates students' written responses to open questions, which plays a fundamental role in education across subjects and levels. At the same time, evaluating free-text responses at scale is labor-intensive and expensive, especially in settings that require timely feedback across large numbers of students~\cite{clauser2024automated}. This tension has made automated grading an important topic for educational data mining and AI for education communities~\cite{attali2006automated}.

\begin{figure}[t]
\center
\includegraphics[width=\columnwidth]{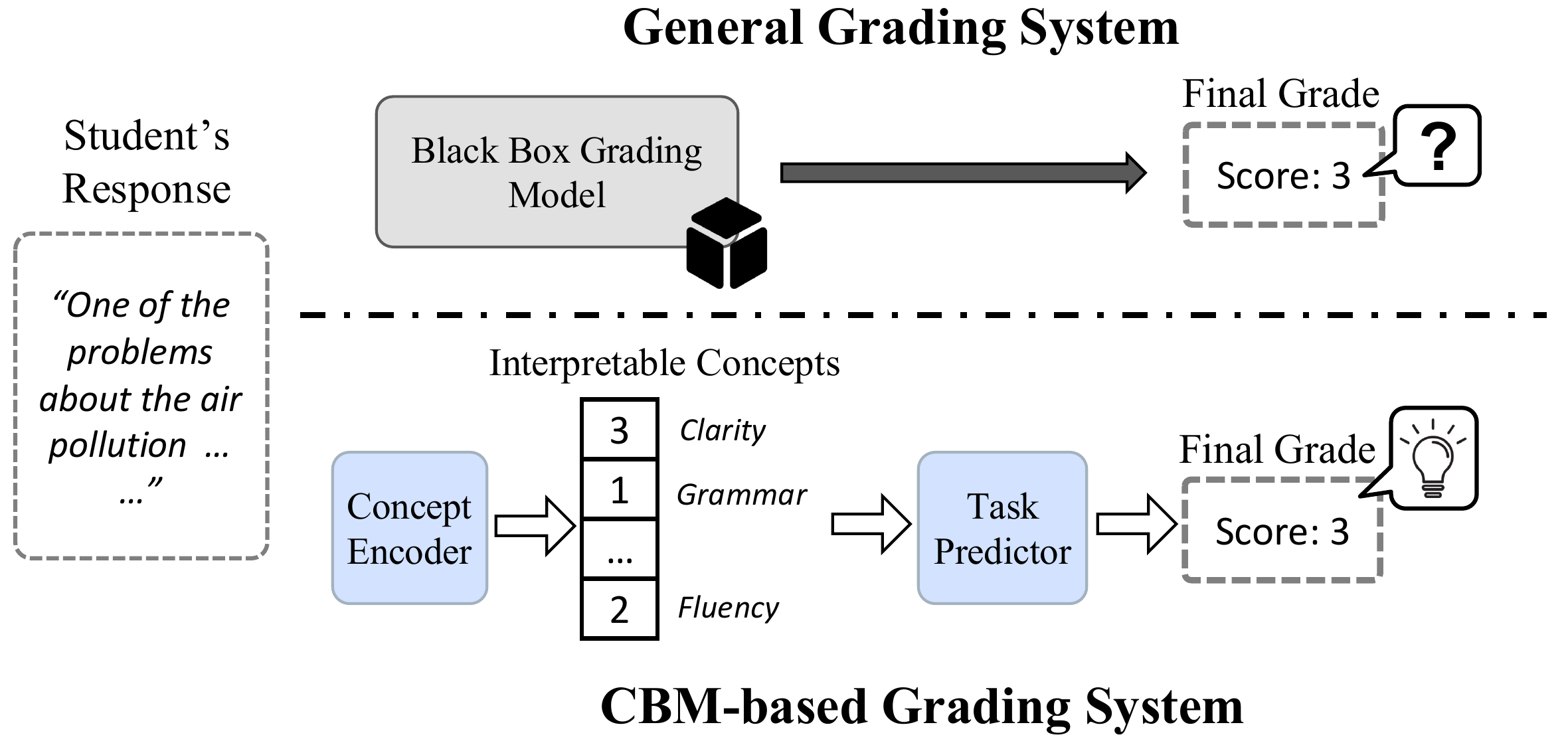}
\vspace{-6mm}
\caption{Comparison of open-ended grading systems.}
\vspace{-4mm}
\label{fig:illustration}
\end{figure}

Recent neural and large language model based grading systems have substantially improved predictive performance~\cite{wang2022use,golchin2025grading}, making automated scoring increasingly viable in practice. However, educational assessment is a high-stakes domain where raw accuracy alone is insufficient. When a system assigns a score to a student's response, instructors need to understand why that score was produced, whether it aligns with the intended rubric, and how to intervene when the model makes a questionable judgment. The black-box nature of these models makes it difficult for educators to verify the reasoning behind a grade, contest, or audit specific decisions, and provide feedback that is pedagogically useful~\cite{wang2024beyond,chu2025rationale,zhao2025is}. This lack of transparency can undermine trust and limit the potential of automated grading systems in reliable educational contexts~\cite{winkelmes2023introduction}.

Concept bottleneck models (CBMs) offer a promising middle ground between predictive strength and user-centric transparency~\cite{koh2020concept}. By constraining predictions to pass through human-interpretable intermediate concepts, CBMs can expose task-specific evidence behind a final grade rather than leaving the decision process hidden inside an end-to-end encoder, which makes them particularly appealing for educational assessment.

Despite this promise, existing CBMs are not yet well-suited to open-ended grading. Firstly, standard CBMs assume all concepts reside in a shared latent space, which is suboptimal for fine-grained grading settings where concepts capture different rubric dimensions and attend to different aspects of responses. In addition, they treat concept scores as evenly spaced categories, which discards the natural ranking information in the rubric annotations. Furthermore, rubric concept labels in real world are inherently noisy and subjective due to annotator disagreement, overlapping rubric dimensions, and ambiguity in open-ended responses~\cite{zhao2025scale}. Existing CBMs generally assume concept labels are reliable, which can degrade grading performance and undermine trust in the concept-level explanations.

To address these limitations, we propose \MethodName{}, a rubric-aware error-correction concept bottleneck model for trustworthy open-ended grading. \MethodName{} first performs rubric-aware concept extraction so that each grading dimension is grounded in evidence relevant to that aspect of the response. It then applies ordinal concept calibration to preserve ranking structure among rubric levels~\cite{bradley1952rank}, thus capturing the score semantics. Finally, it incorporates latent concept error correction to denoise intermediate concept predictions while preserving the interpretable bottleneck.

Through comprehensive experiments, \MethodName{} demonstrates superior grading performance with faithful concept-level reasoning than both state-of-the-art black-box models and transparent baselines. Further analyses validate the contribution of each component and demonstrate the applicability in realistic educational settings. Our contributions are fourfold:

\begin{itemize}[leftmargin=*,nosep,topsep=2pt]
  \item \textbf{Problem formulation:} We approach trustworthy open-ended grading through the lens of transparent decision-making, grounding the grading process on human-interpretable concepts, enabling educators to inspect, verify, and intervene on reasoning.
  \item \textbf{Data resources:} We curate and annotate public open-ended grading benchmarks with rubric-aligned concept labels, creating a valuable resource for future research on interpretable educational assessment.
  \item \textbf{Methodological innovation:} We identify key limitations of standard CBMs in the grading context and propose \MethodName{}, a novel framework that integrates rubric-aware concept encoding, ordinal pairwise calibration, and latent concept error correction to address these limitations.
  \item \textbf{Empirical validation:} We conduct extensive evaluations and experiments, offering insights into the performance, interpretability, and practical utility of \MethodName{} in real-world grading scenarios.
\end{itemize}

\section{Related Work.}
In this section, we review three lines of work relevant to our work.
\subsection{Automated Grading Systems.}
Automated grading has evolved from feature-engineered systems such as e-rater~\cite{attali2006automated} to neural scoring models~\cite{clauser2024automated}. Pre-trained language models further improved accuracy~\cite{wang2022use}, and recent LLM-based approaches incorporate rubric descriptions through prompting~\cite{golchin2025grading}, yet their internal reasoning remains opaque. A growing line of work seeks to address this: Wang et al.~\cite{wang2024beyond} and Chu et al.~\cite{chu2025rationale} align scoring with human rationales, while Schaller et al.~\cite{schaller2024fairness} audit fairness in essay scoring. However, these approaches surface post-hoc explanations without mechanistically guaranteeing that the explanation determines the assigned grade.

\subsection{Concept Bottleneck Models.}
Concept Bottleneck Models~\cite{koh2020concept} improve transparency by routing predictions through human-interpretable concepts. Extensions address prediction uncertainty via probabilistic~\cite{kim2023probabilistic}, stochastic~\cite{vandenhirtz2024stochastic}, and post-hoc~\cite{yuksekgonul2023posthoc} concept bottlenecks, though none model measurement error from a psychometric perspective or account for ordinal concept structure.
CBMs have also been adapted beyond vision: Tan et al.~\cite{tan2024interpreting} interpret pretrained language models via concept bottlenecks, Ismail et al.~\cite{ismail2025concept} target protein design, and Espinosa Zarlenga et al.~\cite{zarlenga2023tabcbm} extend CBMs to tabular data. These works demonstrate the versatility of the CBM approaches in different domains, but the application to open-ended grading remains underexplored.

\subsection{Trustworthy AI in Education.}
Trustworthy AI in education, spanning fairness, transparency, accountability, and reliability, has become a growing concern as AI systems are deployed at scale~\cite{holmes2022ethics,miao2021ai}. Research has documented algorithmic bias across demographic groups in student modeling and admissions~\cite{baker2022algorithmic}, while explainable AI methods have been adapted for intelligent tutoring and learning analytics~\cite{khosravi2022explainable}. Complementary efforts improve the reliability of AI-powered educational tools, such as ontology-aware retrieval for cybersecurity education~\cite{zhao2025ontology}, and regulatory bodies have begun codifying principles for AI in high-stakes assessment~\cite{williamsonprinciples}.
We extend this line of research by achieving trustworthy open-ended grading through concept bottleneck models that enable educators to inspect, verify, and intervene in the grading process.

\section{Preliminaries.}
This section formalizes open-ended grading and reviews concept bottleneck models, which serve as the foundation of our method.

\subsection{Open-Ended Grading Task Formulation.}
We cast open-ended grading in education as an \emph{ordinal} multi-class classification problem. Each instance consists of a question or prompt $q$, a student's free-text response $r$, and optional auxiliary grading context $a$, such as a reference answer or rubric description. We represent a grading instance as
\begin{equation}
  \mathbf{x} = (q,\; r,\; a),
  \label{eq:input}
\end{equation}
and let $\mathcal{X}$ denote the space of grading instances. The grading target is an ordinal score $y \in \mathcal{Y} = \{0,1,\ldots,S\}$, where $S$ denotes the maximum attainable score. Given a labeled dataset $\mathcal{D} = \{(\mathbf{x}^{(n)}, y^{(n)})\}_{n=1}^{N}$, the goal is to learn a grading function
\begin{equation}
  f \colon \mathcal{X} \to \mathcal{Y}, \qquad \hat{y} = f(\mathbf{x}),
  \label{eq:task_function}
\end{equation}
that minimizes the empirical risk
\begin{equation}
  \hat{f} \in \arg\min_{f \in \mathcal{F}}
  \frac{1}{N} \sum_{n=1}^{N}
  \ell\!\left(f(\mathbf{x}^{(n)}), y^{(n)}\right),
  \label{eq:erm}
\end{equation}
for a hypothesis class $\mathcal{F}$ and loss $\ell$.

\subsection{Concept Bottleneck Models.}
\label{sec:cbm}
CBMs~\cite{koh2020concept} improve model transparency by introducing an intermediate layer of human-interpretable \emph{concepts} $\mathbf{c}$ between the raw input $\mathbf{x}$ and the final prediction $\hat{y}$.

A standard CBM decomposes prediction into a concept encoder $\phi$ and a task predictor $\psi$. Given an input $\mathbf{x}$, the concept encoder first estimates:
\begin{equation}
  \phi\colon \mathcal{X} \to \mathbb{R}^{K}, \qquad \hat{\mathbf{c}} = \phi(\mathbf{x}),
  \label{eq:concept_pred}
\end{equation}
and the task predictor then maps the predicted concepts to the final score:
\begin{equation}
  \psi\colon \mathbb{R}^{K} \to \mathcal{Y}, \qquad \hat{y} = \psi(\hat{\mathbf{c}}).
  \label{eq:task_pred}
\end{equation}

The objective of training a CBM is to minimize a joint loss that includes both the concept prediction loss $\ell_{\mathrm{con}}$ and the task prediction loss $\ell_{\mathrm{task}}$:

\begin{equation}
  \resizebox{1\hsize}{!}{$
  \hat{f} \in \arg\min_{f} 
  \frac{1}{N} \sum_{n=1}^{N} \Bigl[
  \lambda_c \ell_{\mathrm{con}}\!\left(\hat{\mathbf{c}}^{(n)}, \mathbf{c}^{(n)}\right) + \lambda_t \ell_{\mathrm{task}}\!\left(\hat{y}^{(n)}, y^{(n)}\right)\Bigr],
  $}
\label{eq:cbm_loss}
\end{equation}
where $\mathbf{c}^{(n)}$ is the human-annotated concept for instance $n$, and $\lambda_c, \lambda_t \geq 0$ are hyperparameters that balance the two losses. Equivalently, the overall model factors as $f = \psi \circ \phi$, so all task-relevant information must pass through the concept layer before a final prediction is produced, which provides a mechanistic interpretability.

The CBMs offer several advantages in educational settings. First, it provides \emph{transparency}: the model's final grade can be decomposed into rubric-level concept predictions that are easier for instructors to inspect. Second, it supports \emph{intervention and auditing}: educators can examine or modify concept scores and observe how the overall prediction changes. Third, it promotes \emph{pedagogical usefulness}: concept-level outputs align naturally with feedback categories used in teaching and assessment. These properties make CBMs a promising framework for trustworthy open-ended grading.

\section{The Proposed REC-CBM Framework.}

\begin{figure*}[t]
    \raggedright
    \includegraphics[width=0.95\textwidth]{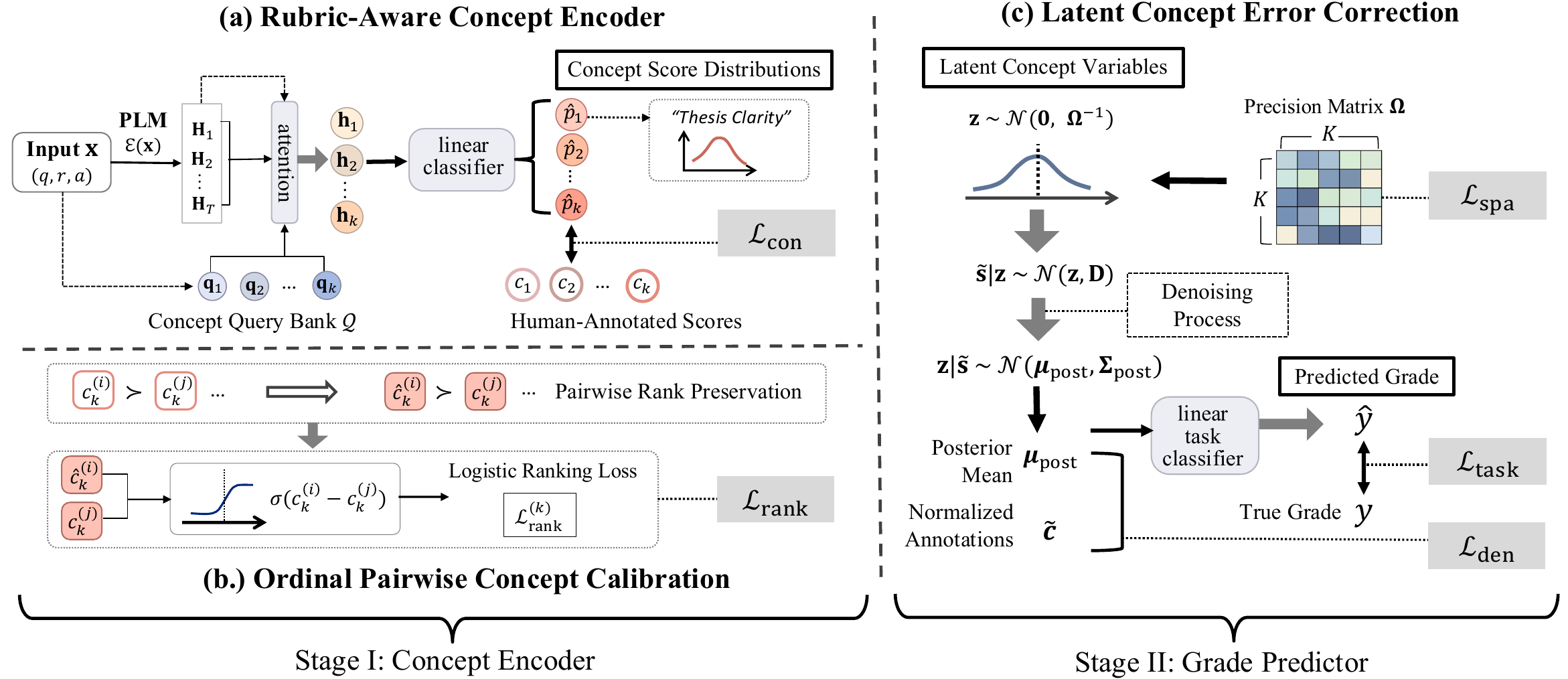}
    \caption{The proposed \MethodName{} framework.}
    \label{fig:framework}
\end{figure*}

\label{sec:method}
In this section, we present \MethodName{}, a rubric-aware error-correction concept bottleneck model as illustrated in Figure~\ref{fig:framework}. We first introduce the rubric-aware concept encoder (\S\ref{sec:race}), then detail the ordinal pairwise concept calibration (\S\ref{sec:ocr}), and the latent concept error correction mechanism (\S\ref{sec:lts}). Finally, we outline the combined learning objective and training paradigm (\S\ref{sec:training}).

\subsection{Rubric-Aware Concept Encoder.}
\label{sec:race}
Vanilla CBMs assume a shared representation space for all concepts, limiting fine-grained grading where rubric dimensions focus on different response aspects. We address this with a rubric-aware concept encoder.

Given an input $\mathbf{x} = (q, r, a)$, we employ a text encoder $\mathcal{E}$ (e.g., a pretrained language model) to obtain contextualized token representations:
\begin{equation}
  \mathbf{H} = \mathcal{E}(\mathbf{x}) \in \mathbb{R}^{T \times d},
  \label{eq:backbone}
\end{equation}
where $T$ is the sequence length and $d$ is the hidden dimension. Then, we introduce a \emph{concept query bank} to formulate the concept prototypes:
\begin{equation}
  \mathcal{Q} = \{\mathbf{q}_k\}_{k=1}^{K} \subset \mathbb{R}^d,
\end{equation}
where $\mathbf{q}_k$ is a learnable vector specializing in the textual evidence relevant to rubric dimension $k$, initialized via QR orthogonal decomposition of a random Gaussian matrix, promoting diverse patterns across concepts. 

For each concept $k$, we compute soft attention weights and aggregate the representation via:
\begin{align}
  \alpha_{k,t} &= \frac{\exp\!\left(\mathbf{q}_k^\top \mathbf{H}_t \;/\; \tau\right)}{\sum_{t'=1}^{T} \exp\!\left(\mathbf{q}_k^\top \mathbf{H}_{t'} \;/\; \tau\right)}, 
  \label{eq:attn} \\
  \mathbf{h}_k &= \sum_{t=1}^{T} \alpha_{k,t}\, \mathbf{H}_t,
  \label{eq:concept_repr}
\end{align}
where $\tau > 0$ is a temperature parameter. The mechanism selects rubric-specific token spans; each $\mathbf{q}_k$ acts as a dimension-specific retrieval query, rendering representations directly interpretable.


Let $M$ denote the maximum ordinal level of each rubric concept, so that $c_k \in \{0,\ldots,M\}$. We employ a linear classifier $\mathbf{V}_k \in \mathbb{R}^{(M+1) \times d}$ to predict the per-concept ordinal distribution $\hat{\mathbf{p}}_k$:
\begin{equation}
  \hat{\mathbf{p}}_k = \mathrm{Softmax}\!\left(\mathbf{V}_k \mathbf{h}_k\right) \in \mathbb{R}^{M+1},
\end{equation}
We define the concept prediction loss as the sum of cross-entropy over all concepts.
\begin{equation}
  \mathcal{L}_{\mathrm{con}} = \sum_{k=1}^{K} \mathrm{CrossEntropy}\!\left(\hat{\mathbf{p}}_k, c_k\right),
\end{equation}

\subsection{Ordinal Pairwise Concept Calibration.}
\label{sec:ocr}
The concept prediction loss from \S\ref{sec:race} treats the ordinal levels of each concept as unordered categories, which discards the natural ranking information in the rubric annotations. To preserve this ordinal structure, we introduce an ordinal pairwise concept calibration. 

First, we compute the expected ordinal concept score $\hat{c}_k$ from the predicted distribution $\hat{\mathbf{p}}_k$:
\begin{equation}
  \hat{c}_k = \mathbb{E}_{m \sim \hat{\mathbf{p}}_k}[m] = \sum_{m=0}^{M} m\,\hat{p}_{k,m} \in [0,M],
\end{equation}
so that $\hat{c}_k$ lives on the same ordinal scale $c_k$. For each concept, we collect the set of within-batch $B$ response pairs that admit a valid comparison under the rubric:
\begin{equation}
  \mathcal{P}_k = \bigl\{(i,j) : c_k^{(i)} > c_k^{(j)},\;
    i,j \in [B]\bigr\}.
  \label{eq:pair_set}
\end{equation}
Inspired by the Bradley-Terry model~\cite{bradley1952rank}, we minimize a logistic ranking loss over these pairs, which encourages the predicted scores to respect the observed ordering:
\begin{equation}
  \mathcal{L}_{\mathrm{rank}}^{(k)} = -\frac{1}{|\mathcal{P}_k|}
    \sum_{(i,j)\in\mathcal{P}_k}
    \log \sigma\!\left(\hat{c}_k^{(i)} - \hat{c}_k^{(j)}\right).
  \label{eq:loss_rank_k}
\end{equation}
where $\sigma(z) = 1/(1 + e^{-z})$ is the logistic sigmoid function. This loss encourages the model to assign higher predicted scores to responses that are annotated with higher concept levels, thus providing the ordinal relationship calibration in the rubric-aligned concepts.

The overall ranking loss averages over all concepts that admit at least one valid comparison in the batch:
\begin{equation}
  \mathcal{L}_{\mathrm{rank}} = \frac{1}{K'}
    \sum_{\substack{k=1 \\ |\mathcal{P}_k|>0}}^{K}
    \mathcal{L}_{\mathrm{rank}}^{(k)},
  \label{eq:loss_rank}
\end{equation}
where $K' = \bigl|\{k : |\mathcal{P}_k| > 0\}\bigr|$.


\subsection{Latent Concept Error Correction.}
\label{sec:lts}

Even after ordinal calibration, the predicted scores $\{\hat{c}_k\}$ from \S\ref{sec:ocr} still inherit annotator disagreement, overlap across rubric dimensions, and ambiguity in open-ended assessment. Passing these scores directly to the grading head would force the final predictor to rely on noisy concept estimates. We therefore introduce a latent correction layer that treats each calibrated concept score as an error-corrupted observation of an underlying latent concept state and then computes a closed-form posterior correction before grade prediction.

Let $\mathbf{z} \in \mathbb{R}^K$ denote the latent concept vector. We place a multivariate Gaussian prior on $\mathbf{z}$ with learnable precision matrix $\boldsymbol{\Omega} \in \mathbb{R}^{K \times K}$:
\begin{equation}
  \mathbf{z} \sim \mathcal{N}\!\left(\mathbf{0},\; \boldsymbol{\Omega}^{-1}\right).
  \label{eq:prior}
\end{equation}
we normalize each predicted concept to the unit interval,
\[
  \tilde{s}_k = \hat{c}_k / M \in [0,1],
\]
and model it as a noisy observation of the corresponding latent component:
\begin{equation}
  \tilde{s}_k \mid z_k
    \;\sim\; \mathcal{N}\!\left(z_k,\; \sigma_k^2\right),
    \quad k \in [K],
  \label{eq:measurement}
\end{equation}
where $\sigma_k^2 > 0$ is the measurement-noise variance of concept $k$. Stacking the $K$ concepts yields $\tilde{\mathbf{s}} \mid \mathbf{z} \sim \mathcal{N}(\mathbf{z}, \mathbf{D})$ with $\mathbf{D} = \diag(\sigma_1^2, \ldots, \sigma_K^2)$.

Under this measurement-error model, the posterior $\mathbf{z} \mid \tilde{\mathbf{s}}$ remains Gaussian with
\begin{align}
  \boldsymbol{\Sigma}_{\mathrm{post}} &= \left(\boldsymbol{\Omega} + \mathbf{D}^{-1}\right)^{-1},
  \label{eq:Sigma_post} \\[2pt]
  \boldsymbol{\mu}_{\mathrm{post}}  &= \boldsymbol{\Sigma}_{\mathrm{post}}\, \mathbf{D}^{-1}\, \tilde{\mathbf{s}}
            \;=\; \mathbf{A}\,\tilde{\mathbf{s}},
  \label{eq:mu_post}
\end{align}
where $\mathbf{A} \triangleq \boldsymbol{\Sigma}_{\mathrm{post}}\,\mathbf{D}^{-1}$ is the denoising matrix. The posterior mean $\boldsymbol{\mu}_{\mathrm{post}}$ is the corrected concept score used by the task head. Intuitively, concepts with larger estimated noise are shrunk more strongly, while off-diagonal structure in $\boldsymbol{\Omega}$ allows related rubric dimensions to share statistical strength.

\begin{proposition}
  \label{prop:mmse}
  Under the Gaussian measurement model in
  Eqs.~\eqref{eq:prior}--\eqref{eq:measurement}, the posterior mean
  $\boldsymbol{\mu}_{\mathrm{post}} = \mathbf{A}\tilde{\mathbf{s}}$ in Eq.~\eqref{eq:mu_post} is the
  unique minimum mean-squared error \textup{(MMSE)} estimator of
  $\mathbf{z}$ given $\tilde{\mathbf{s}}$:
  \[
    \mathbf{A}\tilde{\mathbf{s}}
    = \operatorname*{arg\,min}_{g}
      \;\mathbb{E}\!\left[\|\mathbf{z} - g(\tilde{\mathbf{s}})\|^2\right],
  \]
  with the minimum over all measurable functions $g$.
\end{proposition}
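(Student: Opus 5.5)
The argument has two parts. First I show that the conditional expectation is the unique MMSE estimator among measurable functions. Then I identify $\mathbb{E}[\mathbf{z}\mid\tilde{\mathbf{s}}]$ with $\mathbf{A}\tilde{\mathbf{s}}$ by a Gaussian conjugacy computation. Throughout I assume $\boldsymbol{\Omega}\succ 0$, so that the prior in Eq.~\eqref{eq:prior} is a proper Gaussian, and $\sigma_k^2>0$. These give $\mathbb{E}\|\mathbf{z}\|^2<\infty$, and $\mathbf{D}^{-1}$ exists.

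\textbf{Step 1 (orthogonality decomposition).} Let $g$ be any measurable function with $\mathbb{E}\|g(\tilde{\mathbf{s}})\|^2<\infty$; otherwise the risk is infinite and there is nothing to compare. Write $m(\tilde{\mathbf{s}})=\mathbb{E}[\mathbf{z}\mid\tilde{\mathbf{s}}]$ and expand
\[
\mathbb{E}\|\mathbf{z}-g(\tilde{\mathbf{s}})\|^2=\mathbb{E}\|\mathbf{z}-m(\tilde{\mathbf{s}})\|^2+\mathbb{E}\|m(\tilde{\mathbf{s}})-g(\tilde{\mathbf{s}})\|^2 .
\]
The cross term vanishes. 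To see this, condition on $\tilde{\mathbf{s}}$ via the tower property: $m-g$ is $\tilde{\mathbf{s}}$-measurable and $\mathbb{E}[\mathbf{z}-m\mid\tilde{\mathbf{s}}]=0$. Hence $m$ attains the minimum. Any other minimizer must satisfy $\mathbb{E}\|m-g\|^2=0$, so $g=m$ almost surely. This is the sense of uniqueness I will state, namely uniqueness up to null sets of the law of $\tilde{\mathbf{s}}$. Since that law is a nondegenerate Gaussian with full support, this is the natural notion here.

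\textbf{Step 2 (computing the posterior).} By Bayes' rule,
\[
p(\mathbf{z}\mid\tilde{\mathbf{s}})\propto\exp\bigl(-\tfrac12\mathbf{z}^\top\boldsymbol{\Omega}\mathbf{z}-\tfrac12(\tilde{\mathbf{s}}-\mathbf{z})^\top\mathbf{D}^{-1}(\tilde{\mathbf{s}}-\mathbf{z})\bigr).
\]
Collecting the quadratic and linear terms in $\mathbf{z}$ gives the exponent $-\tfrac12\mathbf{z}^\top(\boldsymbol{\Omega}+\mathbf{D}^{-1})\mathbf{z}+\mathbf{z}^\top\mathbf{D}^{-1}\tilde{\mathbf{s}}$ plus terms free of $\mathbf{z}$. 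Completing the square identifies a Gaussian with covariance $\boldsymbol{\Sigma}_{\mathrm{post}}=(\boldsymbol{\Omega}+\mathbf{D}^{-1})^{-1}$ and mean $\boldsymbol{\Sigma}_{\mathrm{post}}\mathbf{D}^{-1}\tilde{\mathbf{s}}=\mathbf{A}\tilde{\mathbf{s}}$. These are exactly Eqs.~\eqref{eq:Sigma_post}--\eqref{eq:mu_post}. The inverse exists because $\boldsymbol{\Omega}+\mathbf{D}^{-1}$ is a sum of positive definite matrices. Therefore $m(\tilde{\mathbf{s}})=\mathbf{A}\tilde{\mathbf{s}}$, and combining this with Step 1 proves the proposition.

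\textbf{Main obstacle: the modeling mismatch.} Section~\S\ref{sec:lts} defines $\tilde{s}_k=\hat{c}_k/M\in[0,1]$, which is bounded, while Eq.~\eqref{eq:measurement} treats it as Gaussian. The proposition must therefore be read purely within the stated generative model, with $\tilde{\mathbf{s}}$ supported on all of $\mathbb{R}^K$. I will state this explicitly. I will also state the a.s.\ qualifier on uniqueness and the positive-definiteness assumption on the learnable $\boldsymbol{\Omega}$, which in practice could be enforced, for example, by a Cholesky parametrization.
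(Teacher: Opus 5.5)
Your proof is correct. It has the same two-step skeleton as the paper's proof: first, the MMSE estimator is the conditional mean; second, the conditional mean is identified in closed form. The identification step, however, takes a different route.

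\textbf{The paper's route.} The paper writes $\tilde{\mathbf{s}}=\mathbf{z}+\boldsymbol{\varepsilon}$ and forms the joint covariance of $(\mathbf{z},\tilde{\mathbf{s}})$. It then applies the Gaussian conditioning formula $\mathrm{Cov}(\mathbf{z},\tilde{\mathbf{s}})\,\mathrm{Cov}(\tilde{\mathbf{s}},\tilde{\mathbf{s}})^{-1}\tilde{\mathbf{s}}$, which gives $\boldsymbol{\Omega}^{-1}(\boldsymbol{\Omega}^{-1}+\mathbf{D})^{-1}\tilde{\mathbf{s}}$. To reach $\mathbf{A}\tilde{\mathbf{s}}$ it needs the push-through identity $P(P+R)^{-1}=(P^{-1}+R^{-1})^{-1}R^{-1}$.

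\textbf{Your route.} Your Bayes-rule and completing-the-square computation works in precision (information) form. It lands on $(\boldsymbol{\Omega}+\mathbf{D}^{-1})^{-1}\mathbf{D}^{-1}\tilde{\mathbf{s}}$ directly, with no matrix identity. It also certifies that $\boldsymbol{\Sigma}_{\mathrm{post}}$ in Eq.~\eqref{eq:Sigma_post} really is the posterior covariance. The paper's proof never verifies this; it only establishes the mean.

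\textbf{What each buys.} The paper's covariance form gives a ``signal covariance over total covariance'' expression. That makes the reliability/Wiener-filter reading in the subsequent Remark immediate. Your form gives both posterior moments in one computation.

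You also prove the orthogonality decomposition and a.s.\ uniqueness yourself, whereas the paper cites them via $L^2$ uniqueness of conditional expectation. Your caveat that $\tilde{s}_k=\hat{c}_k/M\in[0,1]$ cannot literally be Gaussian is fair, and the paper leaves it implicit. Finally, the positive-definiteness you assume is already guaranteed by the paper's parametrization $\boldsymbol{\Omega}=\mathbf{L}\mathbf{L}^\top+\varepsilon\mathbf{I}$ in Eq.~\eqref{eq:cholesky}.
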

The proof is deferred to Appendix~\ref{app:proof}.

To enforce a positive-definite $\boldsymbol{\Omega}$, we parameterize it through a lower-triangular Cholesky factor $\mathbf{L} \in \mathbb{R}^{K \times K}$:
\begin{equation}
  \boldsymbol{\Omega} = \mathbf{L} \mathbf{L}^\top + \varepsilon \mathbf{I},
  \label{eq:cholesky}
\end{equation}
where $\varepsilon > 0$ is a small regularization constant. Each noise variance is parameterized as $\sigma_k^2 = \exp(\eta_k)$ with $\eta_k \in \mathbb{R}$ learnable, guaranteeing strict positivity.

The corrected concept vector is then passed to a linear task classifier $\mathbf{W} \in \mathbb{R}^{(S+1) \times K}$, producing logits over the $S+1$ grade levels:
\begin{equation}
  \hat{y} = \mathbf{W}\,\boldsymbol{\mu}_{\mathrm{post}}.
  \label{eq:task_head}
\end{equation}
This design preserves the interpretable bottleneck while correcting concept-level noise before grade prediction.

The latent head is trained with three objectives. The task loss supervises the final grade prediction:
\begin{equation}
  \mathcal{L}_{\mathrm{task}} = \mathrm{CrossEntropy}(\hat{y}, y).
  \label{eq:loss_task}
\end{equation}
The denoising alignment loss keeps the corrected concepts close to the normalized rubric annotations $\tilde{\mathbf{c}} = \mathbf{c}/M \in [0,1]^K$:
\begin{equation}
  \mathcal{L}_{\mathrm{den}} = \frac{1}{K}
    \left\|\boldsymbol{\mu}_{\mathrm{post}} - \tilde{\mathbf{c}}\right\|^2.
  \label{eq:loss_den}
\end{equation}
The sparsity penalty regularizes cross-concept dependencies by shrinking the off-diagonal entries of the Cholesky factor:
\begin{equation}
  \mathcal{L}_{\mathrm{spa}} = \sum_{i > j} |L_{ij}|.
  \label{eq:loss_spa}
\end{equation}
This encourages the correction layer to stay near-diagonal unless the data support stronger latent interactions among rubric dimensions.

\subsection{Learning Objective and Training Paradigm.}
\label{sec:training}

The full \MethodName{} training objective combines losses from all components:
\begin{equation}
  \mathcal{L} = \lambda_c \mathcal{L}_{\mathrm{con}} + \lambda_r\mathcal{L}_{\mathrm{rank}}
       + \lambda_t \mathcal{L}_{\mathrm{task}} + \lambda_d \mathcal{L}_{\mathrm{den}} + \lambda_s \mathcal{L}_{\mathrm{spa}},
  \label{eq:total_loss}
\end{equation}
with non-negative hyperparameters $\lambda_c, \lambda_r, \lambda_t, \lambda_d, \lambda_s$.

Optimizing all parameters jointly may entangle the latent head's denoising role with the encoder's representation learning and collapse the interpretable bottleneck. We therefore use two-stage training.

\textbf{Stage~I} jointly trains the text encoder $\mathcal{E}$, concept query bank $\mathcal{Q}$, and per-concept classifiers $\{\mathbf{V}_k\}$ under concept supervision and ordinal calibration objectives:
\begin{equation}
  \min_{\mathcal{E},\;\mathcal{Q},\;\{\mathbf{V}_k\}}
    \;\lambda_c \mathcal{L}_{\mathrm{con}} + \lambda_r\mathcal{L}_{\mathrm{rank}}.
  \label{eq:stage1}
\end{equation}
It thus learns rubric-aligned concept predictors that respect token-level evidence and ordinal structure, after which these components are frozen.

\textbf{Stage~II} fits the latent-correction parameters $(\mathbf{L}, \{\eta_k\}, \mathbf{W})$ on top of the fixed concept outputs:
\begin{equation}
  \min_{\mathbf{L},\;\{\eta_k\},\;\mathbf{W}}
    \;\lambda_t \mathcal{L}_{\mathrm{task}} + \lambda_d \mathcal{L}_{\mathrm{den}} + \lambda_s \mathcal{L}_{\mathrm{spa}}.
  \label{eq:stage2}
\end{equation}
This decomposition ensures that Stage~I yields interpretable, ordinally calibrated concept estimates, while Stage~II learns to correct measurement error and map the corrected concepts to the final grade without distorting the bottleneck. The complete optimization procedure is summarized in Algorithm~\ref{alg:training} in Appendix~\ref{app:algorithm}.

\section{Experiments.}
\label{sec:experiments}
We first introduce the experiment setup and then present comprehensive evaluations.
\subsection{Datasets, Annotation, and Evaluation Protocol.}
\label{sec:datasets}
\emph{Datasets.} We evaluate \MethodName{} on three open-ended grading benchmarks spanning short-answer and essay scoring settings: Mohler short-answer grading~\cite{mohler2011learning}, ASAP 2.0 essay scoring~\cite{crossley2025large}, and MOCHA reading-comprehension answer grading~\cite{chen2020mocha}.

\begin{table}[th]
\centering
\caption{Dataset statistics.}
\vspace{1mm}
\label{tab:dataset_stats}
\resizebox{\linewidth}{!}{
\begin{tabular}{lcccc}
\toprule
\textbf{Dataset} & \textbf{Samples} & \textbf{Concepts} & \textbf{Concept Levels} & \textbf{Grade Levels} \\
\midrule
Mohler & 2,273 & 8 & 3 & 6 \\
\rowcolor{RowGray}
ASAP 2.0 & 17,292 & 8 & 5 & 6 \\
MOCHA & 31,069 & 7 & 3 & 5 \\
\bottomrule
\end{tabular}
}
\end{table}

\begin{table*}[!th]
\scriptsize
\centering
\caption{Main results on the grading benchmarks. \textbf{Bold} marks the best result within each model block. Concept metrics are unavailable for black-box baselines, and ``--'' denotes unavailable values.}
\vspace{1mm}
\label{tab:main_results}
\resizebox{\linewidth}{!}{
\begin{tabular}{ll|cccc|cccc|cccc}
\toprule
\textbf{Dataset} &
\textbf{} &
\multicolumn{4}{c|}{\textbf{Mohler}} &
\multicolumn{4}{c|}{\textbf{ASAP 2.0}} &
\multicolumn{4}{c}{\textbf{MOCHA}} \\
\cmidrule(lr){3-6} \cmidrule(lr){7-10} \cmidrule(lr){11-14}
\textbf{Architecture} &
\textbf{Method} &
\textbf{C-Acc.~$\uparrow$} & \textbf{C-F1~$\uparrow$} & \textbf{T-Acc.~$\uparrow$} & \textbf{T-F1~$\uparrow$} &
\textbf{C-Acc.~$\uparrow$} & \textbf{C-F1~$\uparrow$} & \textbf{T-Acc.~$\uparrow$} & \textbf{T-F1~$\uparrow$} &
\textbf{C-Acc.~$\uparrow$} & \textbf{C-F1~$\uparrow$} & \textbf{T-Acc.~$\uparrow$} & \textbf{T-F1~$\uparrow$} \\
\midrule
\multicolumn{14}{c}{\textit{Pre-trained Language Models (Black-box)}} \\
\midrule
BERT & PLM & -- & -- & 0.706 & 0.529 & -- & -- & 0.591 & 0.485 & -- & -- & 0.578 & 0.463 \\
\rowcolor{RowGray}
BART & PLM & -- & -- & 0.675 & 0.507 & -- & -- & 0.583 & 0.459 & -- & -- & 0.601 & 0.472 \\
GPT-2 & PLM & -- & -- & 0.627 & 0.416 & -- & -- & 0.612 & 0.488 & -- & -- & 0.564 & 0.419 \\
\rowcolor{RowGray}
RoBERTa & PLM & -- & -- & 0.715 & 0.563 & -- & -- & \textbf{0.620} & \textbf{0.526} & -- & -- & \textbf{0.624} & \textbf{0.514} \\
T5-Base & PLM & -- & -- & \textbf{0.741} & \textbf{0.564} & -- & -- & 0.590 & 0.508 & -- & -- & 0.598 & 0.478 \\
\midrule
\multicolumn{14}{c}{\textit{Large Language Models (Black-box)}} \\
\midrule
Llama-3-8B & Zero Shot & -- & -- & 0.158 & 0.207 & -- & -- & 0.334 & 0.175 & -- & -- & 0.305 & 0.219 \\
\rowcolor{RowGray}
Qwen2.5-14B & Zero Shot & -- & -- & 0.070 & 0.104 & -- & -- & \textbf{0.413} & 0.187 & -- & -- & 0.370 & 0.273 \\
Mistral-7B & Zero Shot & -- & -- & 0.329 & \textbf{0.376} & -- & -- & 0.383 & \textbf{0.225} & -- & -- & 0.388 & 0.354 \\
\rowcolor{RowGray}
Mistral-7B & 3-Shot & -- & -- & \textbf{0.399} & 0.352 & -- & -- & 0.277 & 0.221 & -- & -- & \textbf{0.561} & \textbf{0.370} \\
\midrule
\multicolumn{14}{c}{\textit{Concept Bottleneck Models (White-box)}} \\
\midrule
\multirow{5}{*}{\textbf{BERT}} & Vanilla CBM & 0.692 & 0.622 & 0.706 & 0.510 & 0.555 & 0.532 & 0.589 & 0.458 & \textbf{0.734} & 0.565 & 0.586 & 0.473 \\
\rowcolor{RowGray}
& C$^3$M & 0.708 & 0.571 & 0.689 & 0.466 & 0.566 & 0.522 & 0.606 & 0.445 & 0.723 & 0.575 & 0.593 & 0.473 \\
& CT-CBM & 0.638 & 0.522 & 0.583 & 0.237 & 0.492 & 0.478 & 0.598 & 0.449 & 0.669 & 0.525 & 0.576 & 0.446 \\
\rowcolor{RowGray}
& CB-LLM & -- & -- & 0.298 & 0.077 & -- & -- & 0.276 & 0.094 & -- & -- & 0.470 & 0.131 \\
& \MethodName{} & \textbf{0.709} & \textbf{0.648} & \textbf{0.715} & \textbf{0.580} & \textbf{0.573} & \textbf{0.563} & \textbf{0.609} & \textbf{0.466} & 0.731 & \textbf{0.591} & \textbf{0.604} & \textbf{0.486} \\
\addlinespace[2pt]
\cmidrule(lr){1-14}
\multirow{5}{*}{\textbf{BART}} & Vanilla CBM & 0.678 & 0.606 & 0.671 & 0.494 & 0.570 & 0.534 & 0.592 & 0.492 & 0.735 & 0.560 & 0.615 & 0.473 \\
\rowcolor{RowGray}
& C$^3$M & 0.628 & 0.568 & 0.592 & 0.421 & 0.566 & 0.515 & 0.584 & 0.482 & 0.732 & 0.558 & 0.597 & 0.479 \\
& CT-CBM & 0.557 & 0.495 & 0.557 & 0.203 & \textbf{0.601} & \textbf{0.573} & 0.609 & 0.467 & 0.688 & 0.504 & 0.596 & 0.472 \\
\rowcolor{RowGray}
& CB-LLM & -- & -- & 0.298 & 0.077 & -- & -- & 0.276 & 0.094 & -- & -- & 0.470 & 0.131 \\
& \MethodName{} & \textbf{0.707} & \textbf{0.624} & \textbf{0.732} & \textbf{0.541} & 0.578 & 0.513 & \textbf{0.611} & \textbf{0.520} & \textbf{0.738} & \textbf{0.576} & \textbf{0.632} & \textbf{0.497} \\
\addlinespace[2pt]
\cmidrule(lr){1-14}
\multirow{5}{*}{\textbf{GPT-2}} & Vanilla CBM & 0.667 & 0.600 & 0.596 & 0.298 & 0.591 & 0.535 & 0.584 & 0.466 & 0.705 & 0.505 & 0.578 & 0.437 \\
\rowcolor{RowGray}
& C$^3$M & \textbf{0.677} & 0.603 & 0.640 & 0.344 & 0.581 & 0.523 & 0.584 & 0.464 & 0.703 & 0.515 & 0.581 & 0.429 \\
& CT-CBM & 0.488 & 0.429 & 0.496 & 0.110 & 0.346 & 0.280 & 0.363 & 0.089 & 0.399 & 0.278 & 0.475 & 0.129 \\
\rowcolor{RowGray}
& CB-LLM & -- & -- & 0.298 & 0.077 & -- & -- & 0.276 & 0.094 & -- & -- & 0.470 & 0.131 \\
& \MethodName{} & \textbf{0.677} & \textbf{0.623} & \textbf{0.724} & \textbf{0.615} & \textbf{0.595} & \textbf{0.557} & \textbf{0.618} & \textbf{0.507} & \textbf{0.732} & \textbf{0.584} & \textbf{0.617} & \textbf{0.486} \\
\addlinespace[2pt]
\cmidrule(lr){1-14}
\multirow{5}{*}{\textbf{RoBERTa}} & Vanilla CBM & 0.715 & 0.651 & 0.737 & 0.558 & 0.583 & 0.561 & 0.620 & 0.526 & \textbf{0.763} & 0.603 & 0.633 & 0.513 \\
\rowcolor{RowGray}
& C$^3$M & 0.700 & 0.633 & 0.706 & 0.530 & 0.564 & 0.546 & 0.606 & 0.505 & 0.759 & 0.597 & 0.634 & 0.512 \\
& CT-CBM & 0.545 & 0.482 & 0.579 & 0.277 & 0.619 & 0.576 & 0.622 & 0.511 & 0.676 & 0.518 & 0.603 & 0.475 \\
\rowcolor{RowGray}
& CB-LLM & -- & -- & 0.298 & 0.077 & -- & -- & 0.276 & 0.094 & -- & -- & 0.470 & 0.131 \\
& \MethodName{} & \textbf{0.735} & \textbf{0.686} & \textbf{0.746} & \textbf{0.570} & \textbf{0.638} & \textbf{0.607} & \textbf{0.642} & \textbf{0.545} & \textbf{0.763} & \textbf{0.617} & \textbf{0.642} & \textbf{0.530} \\
\addlinespace[2pt]
\cmidrule(lr){1-14}
\multirow{5}{*}{\textbf{T5-Base}} & Vanilla CBM & \textbf{0.702} & 0.622 & 0.711 & 0.484 & 0.593 & 0.555 & 0.571 & 0.473 & 0.729 & 0.558 & 0.608 & 0.481 \\
\rowcolor{RowGray}
& C$^3$M & 0.688 & 0.600 & 0.702 & 0.532 & 0.603 & 0.555 & 0.584 & 0.443 & 0.728 & 0.556 & 0.600 & 0.477 \\
& CT-CBM & 0.589 & 0.518 & 0.610 & 0.267 & 0.610 & 0.576 & 0.595 & 0.463 & 0.734 & 0.570 & 0.618 & 0.483 \\
\rowcolor{RowGray}
& CB-LLM & -- & -- & 0.298 & 0.077 & -- & -- & 0.276 & 0.094 & -- & -- & 0.470 & 0.131 \\
& \MethodName{} & 0.695 & \textbf{0.623} & \textbf{0.724} & \textbf{0.547} & \textbf{0.656} & \textbf{0.616} & \textbf{0.602} & \textbf{0.486} & \textbf{0.774} & \textbf{0.647} & \textbf{0.632} & \textbf{0.500} \\
\bottomrule
\end{tabular}
}
\end{table*}

\emph{Annotation.} None of the three benchmarks include concept-level annotations, so we curate and annotate each dataset with a human-in-the-loop (HITL) annotation pipeline: three domain experts first propose the concepts and corresponding rubrics. Then we prompt GPT-4o~\cite{hurst2024gpt} and Gemini-2.5-pro~\cite{google2025gemini25} to annotate the dataset with proposed rubrics, after which the experts independently review the annotations, identify and resolve inconsistent cases through discussion and majority voting. Dataset-specific descriptions, rubric, and annotation details are deferred to Appendix~\ref{app:datasets_annotation}, and the rubric design rationale, concept groupings, and the rubrics' role in the pipeline are detailed in \S\ref{app:rubric_definitions}. Table~\ref{tab:dataset_stats} summarizes the corpus scale and label-space statistics.

\emph{Evaluation Protocol.} All datasets use a $7{:}2{:}1$ train/dev/test partition. We report task-level accuracy and macro-F1 for grading quality, together with concept-level accuracy and macro-F1 for the intermediate concept predictions. Implementation details such as backbone-specific settings, model selection by validation macro-F1, and stage-specific hyperparameters are deferred to Appendix~\ref{app:implementation_details}. Without specific mention, all results are averaged over five random seeds.

\subsection{Main Results.}
\label{sec:main_results}
\emph{RQ1: Can \MethodName{} improve open-ended grading performance while preserving interpretable concept predictions relative to black-box graders and prior CBM baselines?}

We benchmark \MethodName{} against three baseline classes: (I) fine-tuned PLMs, (II) zero-/few-shot prompted LLMs, and (III) SOTA CBMs adapted to grading; model-specific details appear in Appendix~\ref{app:implementation_details}. Table~\ref{tab:main_results} reports the results. Across three datasets and five backbones, \MethodName{} attains the best T-Acc and T-F1 within every backbone block, and the best C-F1 in 14 of 15 scenarios, improving the bottleneck alongside the final grade. The three baseline classes diverge sharply. Among PLMs, RoBERTa is the strongest black-box grader, yet same-backbone \MethodName{} matches or exceeds its T-F1 on Mohler/ASAP 2.0/MOCHA (0.570/0.545/0.530 vs.\ 0.563/0.526/0.514) while exposing rubric-level evidence that the PLM cannot. Prompted LLMs lag both PLMs and CBMs---T-Acc on Mohler falls to 0.070 (Qwen2.5-14B)---showing that generic instruction-following does not recover rubric-specific grading without task training. Among prior CBMs, Vanilla CBM and C$^3$M are the strongest priors and \MethodName{} improves on both axes across all datasets, whereas CT-CBM and CB-LLM fail on at least one dataset (CB-LLM's T-F1 is constant at 0.077/0.094/0.131 across backbones, indicating majority-class collapse). Critically, accuracy and interpretability move together: on RoBERTa, \MethodName{} lifts C-F1 by $+3.5$/$+4.6$/$+1.4$ points over Vanilla CBM on the three datasets while delivering matching task-side gains. Per-backbone gains and failure-mode diagnostics are in Appendix~\ref{app:main_results_extended}. Overall, \MethodName{} offers a promising white-box grading approach with superior performance and rubric-aligned concept interpretability.

\subsection{Ablation Studies.}
\label{sec:ablation}
\emph{RQ2: Does each component of \MethodName{} contribute a measurable gain to grading accuracy and concept reliability?}

\begin{figure}[!th]
\centering
\includegraphics[width=\linewidth]{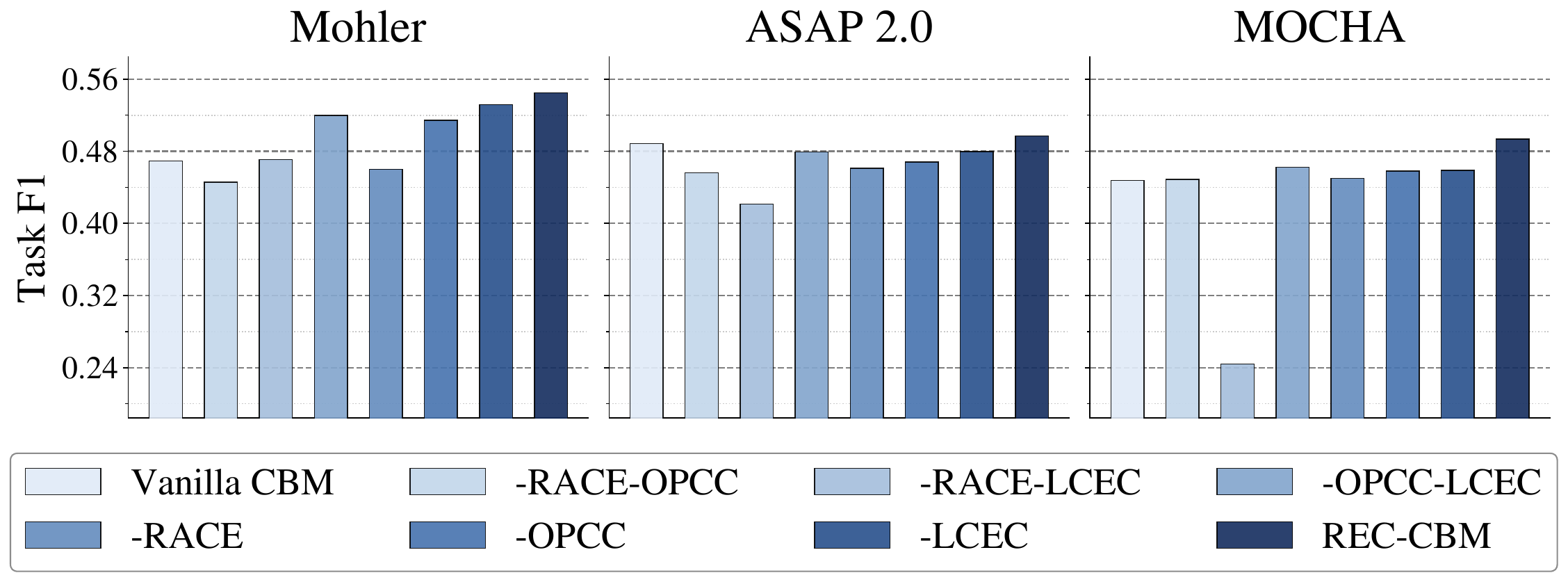}
\vspace{-6mm}
\caption{Ablation over components of \MethodName{}: RACE (rubric-aware encoder), OPCC (ordinal pairwise calibration), and LCEC (latent error correction).}
\label{fig:ablation}
\end{figure}

Fig.~\ref{fig:ablation} compares the full \MethodName{} against seven ablated variants that remove some components: RACE (\S\ref{sec:race}), OPCC (\S\ref{sec:ocr}), and LCEC (\S\ref{sec:lts}). \MethodName{} attains the best T-ACC on every dataset, and removing any single component degrades T-ACC on at least two of the three datasets, indicating that the three modules contribute complementary rather than redundant signal. Among single removals, dropping RACE is the most damaging, consistent with the claim in \S\ref{sec:race} that per-rubric concept queries carry most of the discriminative evidence; removing OPCC or LCEC yields smaller but consistent declines. Double-removal variants sit between the single-removal variants and Vanilla CBM on every dataset, producing a monotone ``more components, better grading'' trend. The sharpest failure is $-$RACE$-$OPCC on MOCHA, where jointly ablating the rubric-aware encoder and the ordinal calibrator nearly halves T-ACC, leaving LCEC to correct a near-uninformative signal on MOCHA's coarse three-level rubric; this identifies RACE together with OPCC as the minimally sufficient upstream for LCEC to be useful. Per-dataset bars and component-wise gain decomposition are reported in Appendix~\ref{app:ablation_extended}.

\subsection{Parameter Analysis.}
\label{sec:param_analysis}
\emph{RQ3: How sensitive is \MethodName{} to its core hyperparameters?}

\begin{figure}[!th]
\centering
\includegraphics[width=\linewidth]{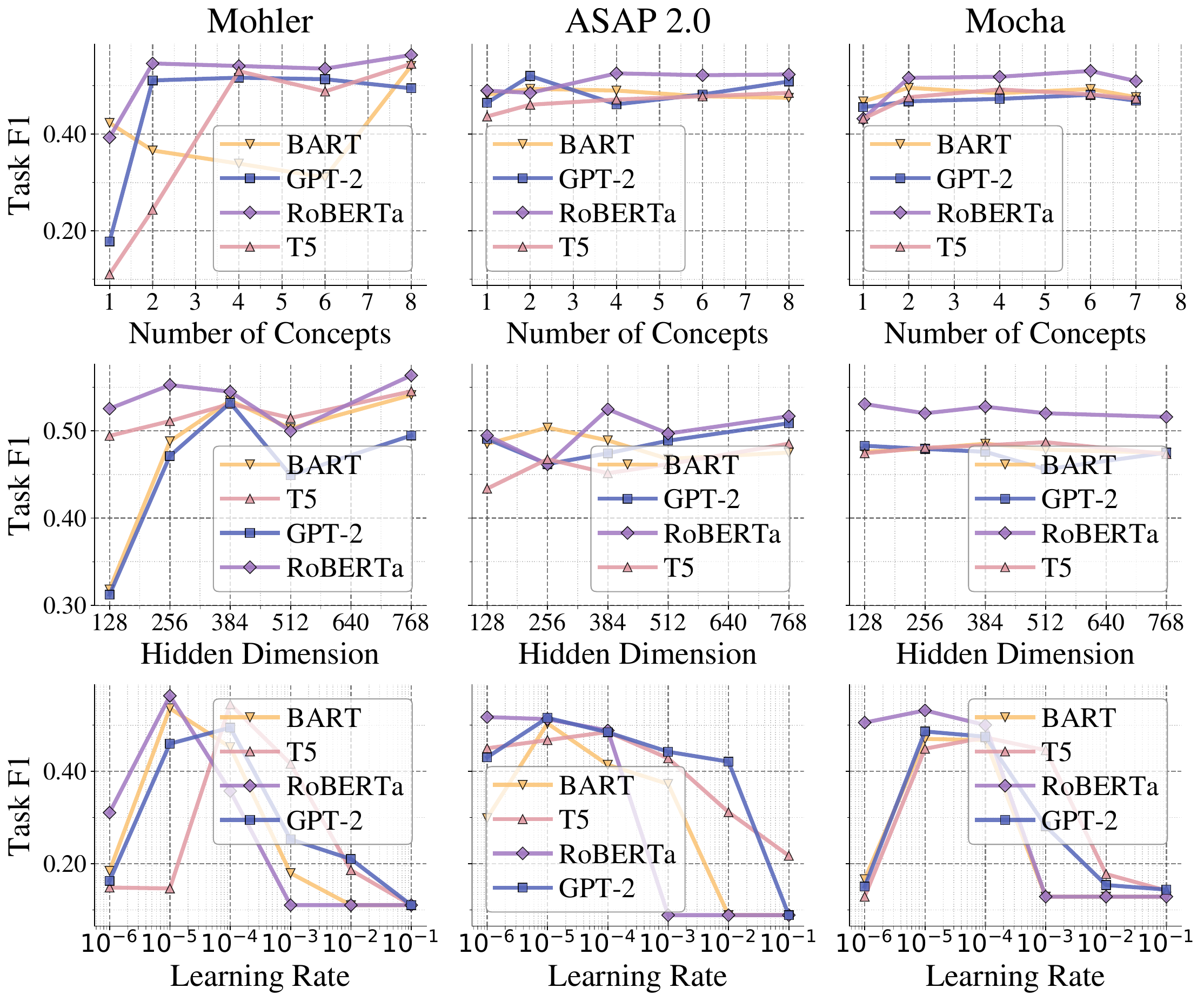}
\vspace{-6mm}
\caption{Parameter analysis of \MethodName{} across various backbones on three grading datasets.}
\label{fig:param_analysis}
\end{figure}

We sweep the three core hyperparameters of \MethodName{} across four backbones on all three datasets and report task F1 in Fig.~\ref{fig:param_analysis}. Two of the three knobs are essentially flat: task F1 saturates by $K{\approx}4$ on every dataset and is insensitive to $d$ once $d \ge 384$, so neither the rubric cardinality nor the encoder width requires backbone-specific tuning. The learning rate is the only stability-critical knob, forming a sharp inverted-U across backbones with a peak in $[10^{-5}, 10^{-4}]$ and a uniform collapse to near-zero F1 once LR$\ge 10^{-3}$. This pattern is consistent with the design in \S\ref{sec:race}--\S\ref{sec:training}: rubric-aware concept queries absorb additional capacity gracefully, while Stage~I jointly updates the encoder, queries, and concept classifiers and is therefore the component most exposed to LR misconfiguration. The narrow tuned LR range used in Table~\ref{tab:impl_keyparams} reflects this sensitivity. Detailed per-backbone analysis and dataset-specific failure modes are reported in Appendix~\ref{app:param_analysis_extended}.

\subsection{Human Intervention.}
\label{sec:human_intervention}
\emph{RQ4: Can educators meaningfully steer \MethodName{} by intervening on its rubric-level concept predictions?}

\begin{figure}[!th]
\centering
\includegraphics[width=\linewidth]{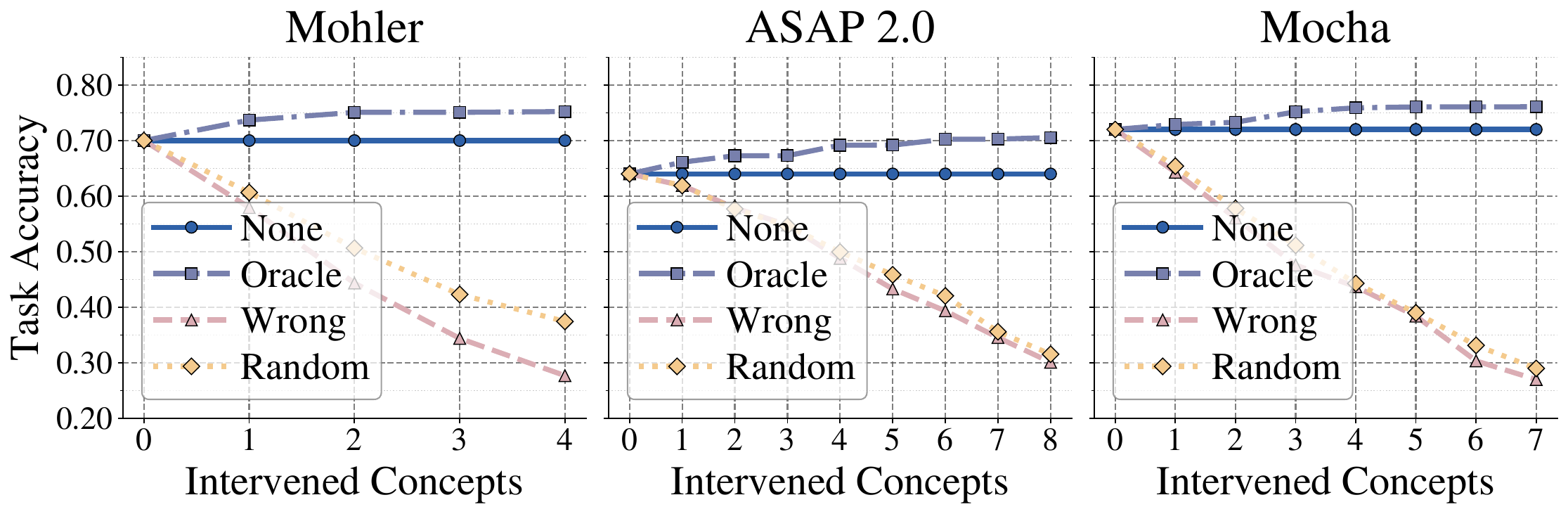}
\vspace{-6mm}
\caption{Human intervention on rubric concepts. For each dataset, we replace the top-$k$ predicted concepts with the oracle rubric label (Oracle), an adversarial wrong label (Wrong), or a uniformly random label (Random), and compare against no intervention (None).}
\label{fig:intervention}
\end{figure}

To test whether the bottleneck is actionable, we simulate educator interventions on the top-$k$ highest-confidence predicted concepts and re-evaluate the frozen Stage~II head (Fig.~\ref{fig:intervention}). Wrong and Random degrade monotonically, losing 30--45 accuracy points at the largest $k$ on every dataset, while Oracle matches or slightly exceeds None. This asymmetry is the key faithfulness signal: corrupting rubric evidence flips the predicted grade, whereas correcting it preserves or improves the grade. The small Oracle--None gap reflects the measurement-error framing of \S\ref{sec:lts}---the Stage~II head consumes Stage~I concept outputs, so one-hot oracle substitutions are mildly off-distribution. Practically, educators can audit or override individual rubric dimensions and observe the grade respond accordingly, realizing the intervention property that motivates \MethodName{}. Per-dataset curves and Wrong--Random differentials are deferred to Appendix~\ref{app:intervention_extended}.

\subsection{Latent Denoising Analysis.}
\label{sec:denoising_analysis}
\emph{RQ5: Does the latent correction module learn a sparse, rubric-meaningful concept dependency structure rather than echoing raw label correlations?}

\begin{figure}[!th]
\centering
\includegraphics[width=\linewidth]{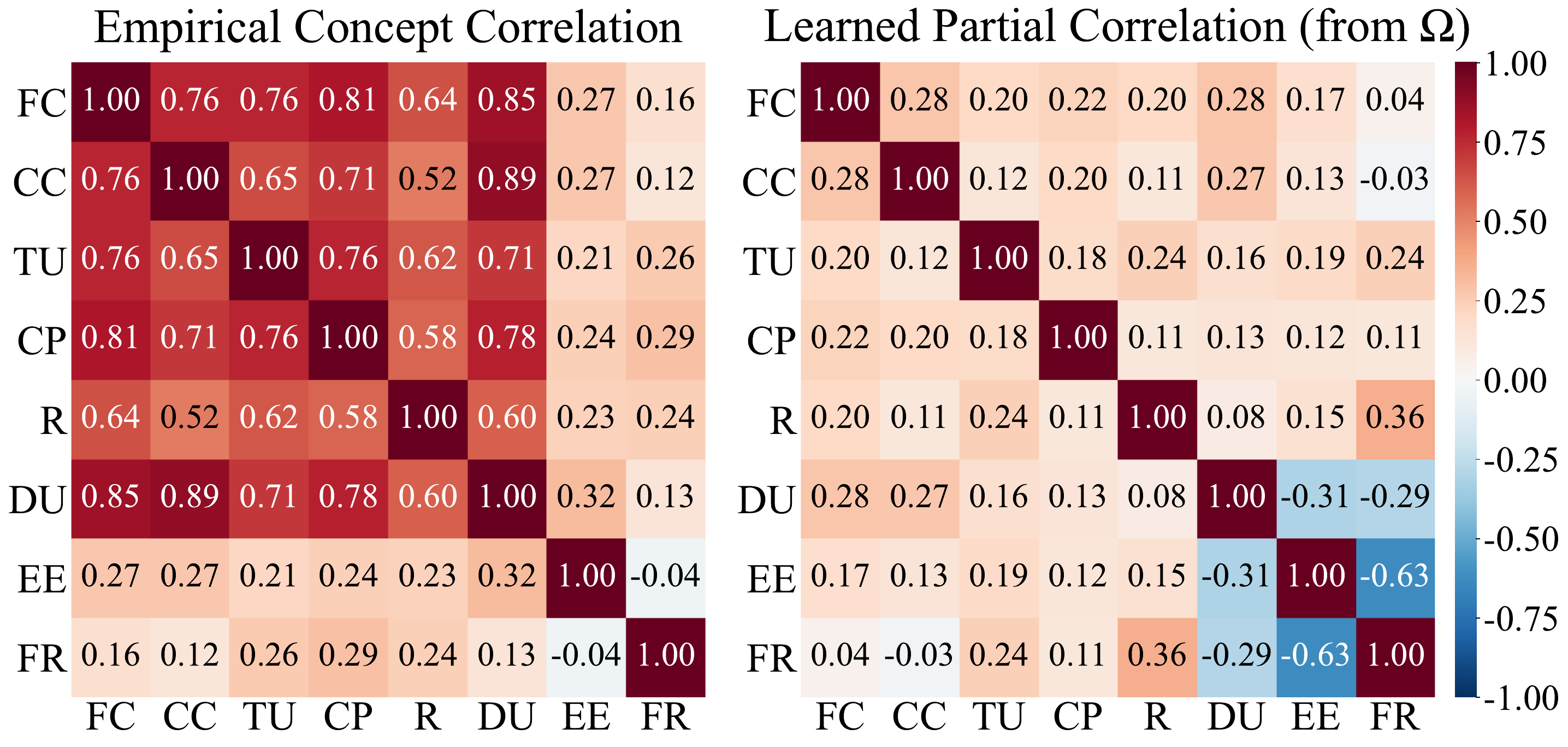}
\vspace{-6mm}
\caption{Latent denoising analysis: empirical rubric-label correlation (left) versus learned partial correlation from the latent precision $\boldsymbol{\Omega}$ (right).}
\label{fig:denoising_analysis}
\end{figure}

Fig.~\ref{fig:denoising_analysis} contrasts the empirical rubric-label correlation (left) with the partial correlations induced by the learned precision $\boldsymbol{\Omega}$ in Eq.~\eqref{eq:prior} (right). The empirical side shows dense co-correlation across content and reasoning dimensions, confirming the annotator overlap and rubric redundancy that motivate \S\ref{sec:lts}. The learned $\boldsymbol{\Omega}$ is markedly sparser, consistent with the near-diagonal target enforced by $\mathcal{L}_{\mathrm{spa}}$ in Eq.~\eqref{eq:loss_spa}, while perserves most interpretable dependencies. Consequently, the denoising matrix $\mathbf{A}=\boldsymbol{\Sigma}_{\mathrm{post}}\mathbf{D}^{-1}$ propagates information along a sparse rubric-level graph that decouples correlated concepts and amplifies the most discriminative dimensions. Deatailed heatmap analysis and the named residual links are deferred to Appendix~\ref{app:denoising_extended}.

\begin{figure}[h]
\center
\includegraphics[width=\columnwidth]{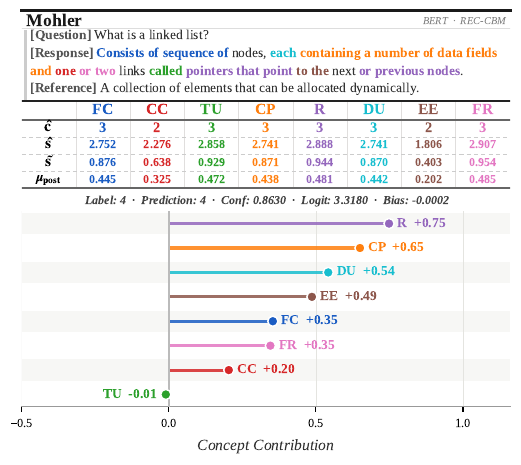}
\vspace{-8mm}
\caption{End-to-end decision trace for a Mohler short-answer response under BERT-\MethodName{}. From top to bottom: rubric-aware token attention, per-concept ordinal predictions ($\hat{\mathbf{c}}$, $\hat{\mathbf{s}}$) with their LCEC input ($\tilde{\mathbf{s}}$) and posterior ($\boldsymbol{\mu}_{\mathrm{post}}$), and each concept's contribution to the predicted grade.}
\label{fig:case_study_qa}
\vspace{-4mm}
\end{figure}

\subsection{Case Study.}
\label{sec:case_study}
\emph{RQ6: Does \MethodName{} produce a human-readable decision trace that an educator can audit end-to-end on a single response?}

Fig.~\ref{fig:case_study_qa} traces a single Mohler short-answer grading decision by a BERT-backbone \MethodName{}, whose predicted grade matches the human label with high confidence. The panel exposes three strata of the bottleneck: rubric-aware token attention from \S\ref{sec:race} (colored response spans), per-concept ordinal scores calibrated by \S\ref{sec:ocr} together with their LCEC-corrected posteriors from \S\ref{sec:lts} (table rows), and the additive per-concept contributions that compose the final grade logit (bar chart). Every quantity in the trace is tied to a named rubric dimension, so the grade is reconstructible from the bottleneck and any dimension can be inspected or overridden, realizing the inspect/verify/intervene workflow motivated in \S1. A component-by-component walkthrough of this example is deferred to Appendix~\ref{app:case_study_extended}.

\section{Conclusion.}
In this paper, we present \MethodName{}, a concept bottleneck framework for trustworthy open-ended grading that integrates a rubric-aware concept encoder, an ordinal pairwise concept calibration, and a latent concept error correction module. Extensive experiments and analyses demonstrate that \MethodName{} attains stronger grading performance while delivering more faithful, robust, and actionable rubric-level reasoning than existing black-box and transparent approaches. Future work will extend \MethodName{} to multilingual and domain-specific assessment and couple it with educator-in-the-loop rubric refinement.


\bibliographystyle{aaai}
\bibliography{references}

\newpage
\onecolumn
\appendix
\section{Notation.}
\label{app:notation}

Table~\ref{tab:notation} summarizes the symbols used in \S\ref{sec:method} and Appendix~\ref{app:method_details}, grouped by role: inputs and labels, rubric-aware encoder, ordinal calibration, latent error correction, and training objective.

\begin{table*}[!h]
\centering
\caption{Symbol list for \MethodName{}.}
\label{tab:notation}
\vspace{1mm}
\resizebox{\linewidth}{!}{
\begin{tabular}{@{}llll@{}}
\toprule
\textbf{Symbol} & \textbf{Meaning} & \textbf{Type} & \textbf{Range / Space} \\
\midrule
\multicolumn{4}{c}{\textit{Inputs and labels}} \\
\cmidrule(lr){1-4}
$\mathbf{x}=(q,r,a)$      & Grading instance (question, response, auxiliary context)       & Tuple    & $\mathcal{X}$ \\
\rowcolor{RowGray}
$y,\ \hat{y}$             & Grade label / predicted grade logits   & Scalar / vector & $\mathcal{Y}=\{0,\ldots,S\}$ / $\mathbb{R}^{S+1}$ \\
$K$                       & Number of rubric concepts & Scalar & $\mathbb{N}$ \\
\rowcolor{RowGray}
$M$                       & Maximum ordinal level per concept & Scalar & $\mathbb{N}$ \\
$c_k,\ \mathbf{c}$        & Concept-level label (scalar entry / stacked vector) & Ordinal / vector & $\{0,\ldots,M\}$ / $\{0,\ldots,M\}^K$ \\
\rowcolor{RowGray}
$\tilde{c}_k,\ \tilde{\mathbf{c}}$ & Normalized concept label $\tilde{c}_k=c_k/M$ & Scalar / vector & $[0,1]$ / $[0,1]^K$ \\
\midrule
\multicolumn{4}{c}{\textit{Rubric-aware encoder (\S\ref{sec:race})}} \\
\cmidrule(lr){1-4}
$\mathcal{E}$             & Text encoder (pretrained backbone) & Function & $\mathcal{X}\to\mathbb{R}^{T\times d}$ \\
\rowcolor{RowGray}
$T,\ d$                   & Sequence length / hidden dimension & Scalar & $\mathbb{N}$ \\
$\mathbf{H},\ \mathbf{H}_t$ & Token representations / $t$-th token state & Matrix / vector & $\mathbb{R}^{T\times d}$ / $\mathbb{R}^d$ \\
\rowcolor{RowGray}
$\mathcal{Q}=\{\mathbf{q}_k\}$ & Concept query bank  & Set of vectors & $\mathbb{R}^d$ \\
$\alpha_{k,t}$            & Soft attention weight of concept $k$ on token $t$  & Scalar   & $[0,1]$ \\
\rowcolor{RowGray}
$\mathbf{h}_k$            & Concept-$k$ aggregated representation & Vector & $\mathbb{R}^d$ \\
$\tau$                    & Attention temperature  & Scalar   & $\mathbb{R}_{>0}$ \\
\rowcolor{RowGray}
$\mathbf{V}_k$            & Per-concept classifier & Matrix   & $\mathbb{R}^{(M+1)\times d}$ \\
$\hat{\mathbf{p}}_k$      & Predicted ordinal distribution for concept $k$ & Distribution & $\mathbb{R}^{M+1}$ \\
\rowcolor{RowGray}
$\hat{p}_{k,m}$           & $m$-th entry of $\hat{\mathbf{p}}_k$ & Scalar & $[0,1]$ \\
\midrule
\multicolumn{4}{c}{\textit{Ordinal calibration (\S\ref{sec:ocr})}} \\
\cmidrule(lr){1-4}
$\hat{c}_k,\ \hat{\mathbf{c}}$ & Predicted (expected) concept score & Scalar / vector & $[0,M]$ / $[0,M]^K$ \\
\rowcolor{RowGray}
$B$                       & Mini-batch size        & Scalar   & $\mathbb{N}$ \\
$\mathcal{P}_k$           & Within-batch valid comparison pairs & Set & $\subseteq [B]\times [B]$ \\
\rowcolor{RowGray}
$K'$                      & \# concepts with non-empty $\mathcal{P}_k$ & Scalar & $\{0,\ldots,K\}$ \\
$\sigma(\cdot)$           & Logistic sigmoid       & Function & $\mathbb{R}\to(0,1)$ \\
\midrule
\multicolumn{4}{c}{\textit{Latent error correction (\S\ref{sec:lts})}} \\
\cmidrule(lr){1-4}
$\tilde{s}_k,\ \tilde{\mathbf{s}}$ & Normalized concept score $\tilde{s}_k=\hat{c}_k/M$ & Scalar / vector & $[0,1]$ / $[0,1]^K$ \\
\rowcolor{RowGray}
$\mathbf{z}$              & Latent concept vector  & Vector   & $\mathbb{R}^K$ \\
$\boldsymbol{\Omega}$     & Latent precision matrix & Matrix (PD) & $\mathbb{R}^{K\times K}$ \\
\rowcolor{RowGray}
$\mathbf{L}$              & Cholesky factor of $\boldsymbol{\Omega}-\varepsilon\mathbf{I}$ & Lower-triangular matrix & $\mathbb{R}^{K\times K}$ \\
$\varepsilon$             & Precision regularizer  & Scalar   & $\mathbb{R}_{>0}$ \\
\rowcolor{RowGray}
$\eta_k$                  & Log-variance parameter & Scalar   & $\mathbb{R}$ \\
$\sigma_k^2=\exp(\eta_k)$ & Measurement-noise variance of concept $k$   & Scalar   & $\mathbb{R}_{>0}$ \\
\rowcolor{RowGray}
$\mathbf{D}$              & Diagonal measurement covariance & Diagonal matrix & $\mathbb{R}^{K\times K}$ \\
$\boldsymbol{\Sigma}_{\mathrm{post}}$ & Posterior covariance of $\mathbf{z}\mid\tilde{\mathbf{s}}$ & Matrix (PD) & $\mathbb{R}^{K\times K}$ \\
\rowcolor{RowGray}
$\boldsymbol{\mu}_{\mathrm{post}}$    & Posterior mean (corrected concept vector) & Vector & $\mathbb{R}^K$ \\
$\mathbf{A}=\boldsymbol{\Sigma}_{\mathrm{post}}\mathbf{D}^{-1}$ & Denoising matrix & Matrix & $\mathbb{R}^{K\times K}$ \\
\rowcolor{RowGray}
$\mathbf{W}$              & Task classifier (grade head) & Matrix   & $\mathbb{R}^{(S+1)\times K}$ \\
\midrule
\multicolumn{4}{c}{\textit{Training objective (\S\ref{sec:training})}} \\
\cmidrule(lr){1-4}
$\mathcal{L}_{\mathrm{con}},\ \mathcal{L}_{\mathrm{rank}}$ & Concept / ordinal ranking losses & Scalar & $\mathbb{R}_{\ge 0}$ \\
\rowcolor{RowGray}
$\mathcal{L}_{\mathrm{task}},\ \mathcal{L}_{\mathrm{den}},\ \mathcal{L}_{\mathrm{spa}}$ & Task / denoising / sparsity losses & Scalar & $\mathbb{R}_{\ge 0}$ \\
$\lambda_c,\lambda_r,\lambda_t,\lambda_d,\lambda_s$ & Loss weights & Scalar & $\mathbb{R}_{\ge 0}$ \\
\rowcolor{RowGray}
$E_{\mathrm{I}},\ E_{\mathrm{II}}$ & Stage-I / Stage-II epoch budgets & Scalar & $\mathbb{N}$ \\
\bottomrule
\end{tabular}
}
\end{table*}

\section{Method Details.}
\label{app:method_details}

This appendix collects the supporting derivation and the full training algorithm referenced in \S\ref{sec:method}.

\subsection{Proof of Proposition~\ref{prop:mmse}.}
\label{app:proof}

We now justify the MMSE claim in Proposition~\ref{prop:mmse}. Recall from \S\ref{sec:lts} that the latent correction module assumes a Gaussian prior over the latent concept vector and a Gaussian measurement model for the normalized concept scores. Under these assumptions, the posterior mean admits a closed form and coincides with the Bayes estimator under squared error.

For reference, we first record the equivalent conditional-mean form
\[
  \boldsymbol{\mu}_{\mathrm{post}}
  = \boldsymbol{\Omega}^{-1}\!\left(\boldsymbol{\Omega}^{-1} + \mathbf{D}\right)^{-1}\!\tilde{\mathbf{s}}
  = \left(\boldsymbol{\Omega} + \mathbf{D}^{-1}\right)^{-1}\!\mathbf{D}^{-1}\tilde{\mathbf{s}},
\]
which makes explicit how the corrected concept vector balances latent precision against measurement noise.

\begin{proof}
  By Eqs.~\eqref{eq:prior}--\eqref{eq:measurement}, we may write the
  observation model in vector form as
  \[
    \tilde{\mathbf{s}} = \mathbf{z} + \boldsymbol{\varepsilon},
    \qquad
    \mathbf{z} \sim \mathcal{N}(\mathbf{0}, \boldsymbol{\Omega}^{-1}),
    \qquad
    \boldsymbol{\varepsilon} \sim \mathcal{N}(\mathbf{0}, \mathbf{D}),
  \]
  where $\boldsymbol{\varepsilon}$ is independent of $\mathbf{z}$. It follows
  that $(\mathbf{z}, \tilde{\mathbf{s}})$ is jointly Gaussian with mean zero and
  block covariance
  \[
    \mathrm{Cov}
    \begin{pmatrix}
      \mathbf{z} \\
      \tilde{\mathbf{s}}
    \end{pmatrix}
    =
    \begin{pmatrix}
      \boldsymbol{\Omega}^{-1} & \boldsymbol{\Omega}^{-1} \\
      \boldsymbol{\Omega}^{-1} & \boldsymbol{\Omega}^{-1} + \mathbf{D}
    \end{pmatrix}.
  \]
  Indeed, $\mathrm{Cov}(\mathbf{z}, \tilde{\mathbf{s}}) =
  \mathrm{Cov}(\mathbf{z}, \mathbf{z} + \boldsymbol{\varepsilon}) =
  \boldsymbol{\Omega}^{-1}$ because the measurement noise is independent of
  $\mathbf{z}$, and $\mathrm{Cov}(\tilde{\mathbf{s}}, \tilde{\mathbf{s}}) =
  \boldsymbol{\Omega}^{-1} + \mathbf{D}$.

  For any square-integrable random vector, the estimator minimizing
  $\mathbb{E}[\|\mathbf{z} - g(\tilde{\mathbf{s}})\|^2]$ over all measurable
  functions $g$ is the conditional expectation
  $\mathbb{E}[\mathbf{z} \mid \tilde{\mathbf{s}}]$. Since the pair
  $(\mathbf{z}, \tilde{\mathbf{s}})$ is jointly Gaussian, this conditional
  expectation is affine, and because the mean is zero it is in fact linear:
  \[
    \mathbb{E}[\mathbf{z} \mid \tilde{\mathbf{s}}]
    =
    \mathrm{Cov}(\mathbf{z}, \tilde{\mathbf{s}})
    \,\mathrm{Cov}(\tilde{\mathbf{s}}, \tilde{\mathbf{s}})^{-1}
    \tilde{\mathbf{s}}
    =
    \boldsymbol{\Omega}^{-1}
    \left(\boldsymbol{\Omega}^{-1} + \mathbf{D}\right)^{-1}
    \tilde{\mathbf{s}}.
  \]

  It remains to express this linear estimator in the posterior form used in
  the main text. Using the identity
  $P(P + R)^{-1} = (I + RP^{-1})^{-1} = (P^{-1} + R^{-1})^{-1}R^{-1}$
  with $P = \boldsymbol{\Omega}^{-1}$ and $R = \mathbf{D}$, we obtain
  \[
    \mathbb{E}[\mathbf{z} \mid \tilde{\mathbf{s}}]
    =
    \left(\boldsymbol{\Omega} + \mathbf{D}^{-1}\right)^{-1}
    \mathbf{D}^{-1}\tilde{\mathbf{s}}
    =
    \boldsymbol{\Sigma}_{\mathrm{post}}\,\mathbf{D}^{-1}\tilde{\mathbf{s}}
    =
    \mathbf{A}\tilde{\mathbf{s}}
    =
    \boldsymbol{\mu}_{\mathrm{post}}.
  \]
  Therefore, $\boldsymbol{\mu}_{\mathrm{post}}$ is the MMSE estimator of
  $\mathbf{z}$ given $\tilde{\mathbf{s}}$. Uniqueness holds up to almost-sure
  equality because conditional expectation is unique in $L^2$.
\end{proof}

\begin{remark}
  In the diagonal case
  $\boldsymbol{\Omega} = \diag(\sigma_{z_1}^{-2}, \ldots, \sigma_{z_K}^{-2})$,
  the denoising matrix becomes diagonal and the $k$-th entry simplifies to
  \[
    A_{kk}
    = \frac{\sigma_{z_k}^2}{\sigma_{z_k}^2 + \sigma_k^2}
    = \rho_k,
  \]
  recovering Spearman's reliability coefficient and the scalar Wiener filter for concept~$k$~\cite{spearman1904proof}.
  In the correlated case, the off-diagonal entries of $\boldsymbol{\Omega}$
  propagate information from reliably measured concepts to noisily
  measured ones, implementing a multivariate generalization of
  reliability-weighted shrinkage that is optimal in the MMSE sense.
\end{remark}

\subsection{Two-Stage Training Algorithm.}
\label{app:algorithm}

Algorithm~\ref{alg:training} expands the two-stage optimization procedure described in \S\ref{sec:training}. The key design choice is to decouple \emph{concept learning} from \emph{error-aware grade prediction}. In Stage~I, the model learns rubric-aligned concept predictors directly from the response text, while the ordinal ranking objective encourages concept estimates to preserve the relative ordering induced by rubric labels. In Stage~II, these concept predictors are frozen and treated as fixed measurements, allowing the latent correction module to focus on denoising and aggregating concept evidence for final grade prediction without altering the interpretable bottleneck.

This stage-wise decomposition serves two purposes. First, it stabilizes optimization by preventing the latent task head from backpropagating through the concept encoder and distorting the rubric-specific representations. Second, it preserves the semantics of the concept layer: the concept scores remain grounded in textual evidence learned in Stage~I, while Stage~II only adjusts how these noisy concept measurements are combined downstream. The resulting procedure therefore maintains reviewer-facing interpretability while improving robustness to annotation noise and cross-concept dependency.

\begin{algorithm}[th]
\caption{Two-stage training procedure for \MethodName{}}
\label{alg:training}
\begin{algorithmic}[1]
\Require Training set $\mathcal{D} = \{(\mathbf{x}^{(n)}, y^{(n)}, \mathbf{c}^{(n)})\}_{n=1}^{N}$
\Require Loss weights $\lambda_c, \lambda_r, \lambda_t, \lambda_d, \lambda_s$ and epoch budgets $E_{\mathrm{I}}, E_{\mathrm{II}}$
\Ensure Trained parameters $\Theta = (\mathcal{E}, \mathcal{Q}, \{\mathbf{V}_k\}, \mathbf{L}, \{\eta_k\}, \mathbf{W})$
\State Initialize encoder $\mathcal{E}$ and concept-query bank $\mathcal{Q} = \{\mathbf{q}_k\}_{k=1}^{K}$
\State Initialize concept classifiers $\{\mathbf{V}_k\}_{k=1}^{K}$, latent parameters $(\mathbf{L}, \{\eta_k\})$, and task head $\mathbf{W}$
\Statex \textbf{Stage I: Learn rubric-aligned ordinal concept predictors}
\For{$e \gets 1$ \textbf{to} $E_{\mathrm{I}}$}
  \ForAll{minibatches $\mathcal{B} \subset \mathcal{D}$}
    \State Encode each response in $\mathcal{B}$ with $\mathcal{E}$ to obtain token states $\mathbf{H}$
    \State Apply concept queries $\mathcal{Q}$ to compute rubric-specific representations $\{\mathbf{h}_k\}_{k=1}^{K}$
    \State Predict concept distributions $\{\hat{\mathbf{p}}_k\}_{k=1}^{K}$ with $\{\mathbf{V}_k\}_{k=1}^{K}$
    \State Compute expected concept scores $\{\hat{c}_k\}_{k=1}^{K}$ from $\{\hat{\mathbf{p}}_k\}_{k=1}^{K}$
    \State Construct within-batch comparison sets $\{\mathcal{P}_k\}_{k=1}^{K}$ from concept labels
    \State Evaluate concept loss $\mathcal{L}_{\mathrm{con}}$ and ranking loss $\mathcal{L}_{\mathrm{rank}}$
    \State Update $(\mathcal{E}, \mathcal{Q}, \{\mathbf{V}_k\})$ using $\lambda_c \mathcal{L}_{\mathrm{con}} + \lambda_r \mathcal{L}_{\mathrm{rank}}$
  \EndFor
\EndFor
\State Freeze $\mathcal{E}$, $\mathcal{Q}$, and $\{\mathbf{V}_k\}$
\Statex \textbf{Stage II: Learn latent correction and grade prediction}
\For{$e \gets 1$ \textbf{to} $E_{\mathrm{II}}$}
  \ForAll{minibatches $\mathcal{B} \subset \mathcal{D}$}
    \State Recompute frozen concept predictions $\{(\hat{\mathbf{p}}_k, \hat{c}_k)\}_{k=1}^{K}$ for $\mathcal{B}$ without gradients
    \State Normalize concept scores: $\tilde{s}_k \leftarrow \hat{c}_k / M$ for all $k \in [K]$
    \State Form $\mathbf{D} = \diag(\exp(\eta_1), \ldots, \exp(\eta_K))$ and $\boldsymbol{\Omega} = \mathbf{L}\mathbf{L}^{\top} + \varepsilon \mathbf{I}$
    \State Compute posterior mean $\boldsymbol{\mu}_{\mathrm{post}}$ via Eq.~\eqref{eq:mu_post}
    \State Predict grade logits $\hat{y} \leftarrow \mathbf{W}\,\boldsymbol{\mu}_{\mathrm{post}}$
    \State Evaluate $\mathcal{L}_{\mathrm{task}}$, $\mathcal{L}_{\mathrm{den}}$, and $\mathcal{L}_{\mathrm{spa}}$
    \State Update $(\mathbf{L}, \{\eta_k\}, \mathbf{W})$ using $\lambda_t \mathcal{L}_{\mathrm{task}} + \lambda_d \mathcal{L}_{\mathrm{den}} + \lambda_s \mathcal{L}_{\mathrm{spa}}$
  \EndFor
\EndFor
\State \Return $\Theta = (\mathcal{E}, \mathcal{Q}, \{\mathbf{V}_k\}, \mathbf{L}, \{\eta_k\}, \mathbf{W})$
\end{algorithmic}
\end{algorithm}

Compared with a one-stage end-to-end update, Algorithm~\ref{alg:training} makes the information flow explicit. Stage~I determines \emph{what} rubric evidence is extracted from text, and Stage~II determines \emph{how} that evidence should be reliability-weighted and aggregated for grading. This separation is especially useful in open-ended assessment, where concept annotations are informative but imperfect, because it lets the latent head correct noisy concept measurements without overwriting the rubric-aligned concept definitions learned from the response text.

\section{Datasets and Annotation.}
\label{app:datasets_annotation}

This appendix details the benchmarks and the rubric-aligned human-in-the-loop annotation pipeline used throughout \S\ref{sec:datasets}.

\subsection{Benchmark Details.}
\label{app:benchmark_details}

We use three public grading benchmarks covering complementary open-ended assessment settings. Mohler~\cite{mohler2011learning} is a short-answer grading benchmark in computer science, where each instance contains a question, a reference answer, and a student response. ASAP 2.0~\cite{crossley2025large} is an essay-scoring benchmark containing longer student compositions graded on an ordinal holistic scale. MOCHA~\cite{chen2020mocha} is a reading-comprehension answer-grading benchmark in which each response is evaluated with respect to a question, supporting context, and reference answer.


\subsection{Annotation Pipeline.}
\label{app:annotation_rubrics}

None of the three benchmarks includes concept-level supervision, so we curate rubric-aligned concept annotations for all datasets. We use a human-in-the-loop workflow involving three domain experts. The experts first propose the concept inventory and corresponding rubrics for each benchmark, specifying the ordinal interpretation and anchored descriptors of every concept level. Given these rubric definitions, GPT-4o~\cite{hurst2024gpt} and Gemini-2.5-pro~\cite{google2025gemini25} produce initial concept annotations for each response.

The same three experts then validate the LLM-generated annotations. They independently review the proposed labels, identify inconsistent or ambiguous cases, and resolve disagreements through discussion and majority voting. This pipeline combines scalable initial labeling with expert oversight, while preserving a transparent decision rule for the final concept labels.

To make this annotation stage reproducible, we use a shared prompt template that conditions on benchmark-specific input fields and rubric definitions, while keeping the output interface fixed across datasets. The template is used only to generate the initial concept annotations; all final labels still come from the expert-validation procedure described above.

\begin{tcolorbox}[promptstyle, title=Prompt for LLM Concept Annotation]
\small
\ttfamily
You are an expert educational assessor for open-ended grading. Your task is to assign one rubric-aligned ordinal label to each concept for a student response, using only the provided benchmark inputs and rubric definitions.\\

Inputs:\\
Benchmark: <benchmark\_name>\\
Question or prompt (q): <question\_or\_prompt>\\
Student response (r): <student\_response>\\
Auxiliary grading context (a): <reference\_answer>, <concept\_list>, <rubric\_definitions>\\

Optional in-context examples:\\
Example 1 input: <example\_1\_question>, <example\_1\_reference>, <example\_1\_rubric>, <example\_1\_response>\\
Example 1 output: <json\_annotation\_list\_1>\\
Example 2 input: <example\_2\_question>, <example\_2\_reference>, <example\_2\_rubric>, <example\_2\_response>\\
Example 2 output: <json\_annotation\_list\_2>\\

Requirements:\\
1. Score each concept independently using only the rubric definitions and the supplied benchmark context.\\
2. Use the reference answer, concept list, and rubric descriptors when evaluating the student response.\\
3. Respect the ordered level meanings defined in the rubrics; do not interpolate between levels or invent new labels.\\
4. For each concept, provide a brief evidence snippet grounded in the student response or the supplied grading context.\\
5. If the evidence is ambiguous between two adjacent levels, choose the lower-supported level.\\
6. Do not provide explanations outside the requested output structure.\\

Return only a structured list with one entry per concept in the format:\\
concept=<name>; label=<ordinal\_level>; evidence=<brief\_evidence>
\end{tcolorbox}

The placeholders are instantiated with the benchmark-specific inputs used throughout the experiments. For Mohler, the prompt receives the question, student response, and auxiliary grading context consisting of the reference answer, the eight-concept inventory, and the corresponding three-level rubrics. For ASAP 2.0, it receives the essay prompt, student essay, and auxiliary grading context consisting of the concept inventory and the corresponding five-level rubrics. For MOCHA, it receives the reading context, question, candidate answer, and auxiliary grading context consisting of the reference answer, the seven-concept inventory, and the corresponding three-level rubrics. Grade labels follow the native benchmark scales summarized in Table~\ref{tab:dataset_stats}.

\subsection{Rubric Definitions.}
\label{app:rubric_definitions}

Tables~\ref{tab:rubric_asap}--\ref{tab:rubric_mocha} list the rubric dimensions for ASAP~2.0, Mohler, and MOCHA, including the textual definition, an illustrative excerpt with the assigned score, and the ordinal score scale. These rubrics are the inputs to the annotation pipeline in \S\ref{app:annotation_rubrics} and the supervision targets for the concept encoder in \S\ref{sec:race}. ASAP~2.0 uses eight concepts on a five-level scale; Mohler uses eight concepts on a three-level scale; MOCHA uses seven concepts on a three-level scale.

\emph{Construction principles.} The three domain experts described in \S\ref{app:annotation_rubrics} drafted each inventory under three criteria. First, the chosen concepts should \emph{cover} the dominant rubric axes that determine the holistic grade for the benchmark, so that grading evidence is not pushed outside the bottleneck. Second, concepts should be \emph{non-redundant}, with each dimension contributing distinct evidence rather than re-scoring an aspect already captured by a sibling concept. Third, every level of every concept should be \emph{ordinally interpretable}, with an anchored descriptor that an expert can apply consistently. These criteria match how the rubrics are subsequently consumed: the concept encoder in \S\ref{sec:race} treats each rubric dimension as a separate supervision target, so coverage and non-redundancy directly determine whether the bottleneck is sufficient for the task.

\emph{Why the scale cardinalities differ.} ASAP~2.0 uses a five-level ordinal scale per concept, while Mohler and MOCHA use three. The cardinality reflects the granularity that human raters can reliably assign given the response artifact. Multi-paragraph essays in ASAP~2.0 (\S\ref{app:benchmark_details}) afford enough surface evidence per dimension to distinguish five anchored bands, whereas the short answers in Mohler and the short candidate spans in MOCHA support only a coarser correct/partial/incorrect distinction without forcing raters to interpolate. The same contrast is reflected by the per-benchmark concept-level counts in Table~\ref{tab:dataset_stats}.

\emph{Concept groupings.} Within each benchmark, the rubric dimensions in Tables~\ref{tab:rubric_asap}--\ref{tab:rubric_mocha} fall into a small number of clusters that aid scanning. For ASAP~2.0, Thesis Clarity, Use of Evidence, and Critical Thinking Depth target content and reasoning; Organization \& Coherence and Sentence Variety target structural organization; Grammar \& Mechanics, Vocabulary Appropriateness, and Fluency / Readability target surface form. For Mohler, Factual Correctness, Concept Coverage, and Relevance target answer correctness and coverage; Terminology Usage, Clarity / Precision, and Fluency / Readability target expression; Depth of Understanding and Example / Elaboration target reasoning depth. For MOCHA, Relevance to Question and Completeness target alignment with the question; Textual Grounding, Coreference Resolution, and Inference Accuracy target consistency with the supporting passage; Paraphrasing Quality and Conciseness \& Clarity target expression of the candidate.

\emph{Role of the rubrics inside \MethodName{}.} Each table column maps to a specific component of the pipeline. The \emph{definition} and \emph{score scale} columns are inserted verbatim into the \texttt{<rubric\_definitions>} and \texttt{<concept\_list>} placeholders of the LLM annotation prompt in \S\ref{app:annotation_rubrics}. The resulting expert-validated ordinal labels $c_k$ become the per-concept supervision targets for the cross-entropy classifier in \S\ref{sec:race} and define the within-batch comparison set $\mathcal{P}_k$ that drives the ordinal pairwise calibration objective in \S\ref{sec:ocr}. The illustrative excerpts in the \emph{example} column are not consumed by the pipeline and serve only as anchors for the human reviewers in the HITL loop.

\begin{longtable}{@{}>{\footnotesize}p{0.15\linewidth} >{\footnotesize}p{0.26\linewidth} >{\footnotesize}p{0.35\linewidth} >{\footnotesize}p{0.18\linewidth}@{}}
\caption{Mohler rubric: eight concepts scored on a three-level ordinal scale.}
\label{tab:rubric_mohler}\\
\toprule
\textbf{Concept} & \textbf{Definition} & \textbf{Example} & \textbf{Score scale} \\
\midrule
\endfirsthead
\multicolumn{4}{l}{\textit{Table~\ref{tab:rubric_mohler} continued from previous page.}}\\
\toprule
\textbf{Concept} & \textbf{Definition} & \textbf{Example} & \textbf{Score scale} \\
\midrule
\endhead
\midrule
\multicolumn{4}{r}{\textit{Continued on next page.}}\\
\endfoot
\bottomrule
\endlastfoot
Factual Correctness (FC) & The extent to which the answer contains accurate domain knowledge aligned with the desired answer. &
\emph{Student Answer:} ``It simulates portions of the desired final product\ldots'' (FC$=3$, correct)\newline
\emph{Student Answer:} ``A prototype program is used to collect data.'' (FC$=1$, incorrect concept) &
1 = Incorrect\newline
2 = Partially correct\newline
3 = Correct \\
\midrule
Concept Coverage (CC) & The degree to which the answer includes all key ideas required by the reference answer. &
\emph{Student Answer:} ``To simulate portions of the desired final product\ldots'' (CC$=3$, complete)\newline
\emph{Student Answer:} ``To find problems before finalizing.'' (CC$=2$, partial) &
1 = Missing key ideas\newline
2 = Partial coverage\newline
3 = Complete coverage \\
\midrule
Terminology Usage (TU) & The extent to which domain-specific terms are used accurately and appropriately. &
\emph{Student Answer:} ``It helps test the program.'' (TU$=2$, basic terminology)\newline
\emph{Student Answer:} ``It helps with things in coding.'' (TU$=1$, lacks domain terms) &
1 = Inaccurate or absent terminology\newline
2 = Basic usage\newline
3 = Appropriate usage \\
\midrule
Clarity / Precision (CP) & The extent to which the answer is clearly expressed and avoids ambiguity or vague wording. &
\emph{Student Answer:} ``It simulates how parts of the system will behave.'' (CP$=3$, clear)\newline
\emph{Student Answer:} ``It helps with things.'' (CP$=1$, vague) &
1 = Unclear / vague\newline
2 = Somewhat clear\newline
3 = Clear and precise \\
\midrule
Relevance (R) & The extent to which the answer directly addresses the question without irrelevant content. &
\emph{Student Answer:} ``It helps programmers improve code.'' (R$=2$, partially relevant)\newline
\emph{Student Answer:} ``Programs are written using code.'' (R$=1$, irrelevant) &
1 = Irrelevant\newline
2 = Partially relevant\newline
3 = Fully relevant \\
\midrule
Depth of Understanding (DU) & The extent to which the answer reflects conceptual understanding beyond surface-level recall. &
\emph{Student Answer:} ``It simulates system behavior and helps identify issues before full development.'' (DU$=3$, deeper understanding)\newline
\emph{Student Answer:} ``It finds problems before finalizing.'' (DU$=2$, basic understanding) &
1 = Surface-level\newline
2 = Basic understanding\newline
3 = Deeper understanding \\
\midrule
Example / Elaboration (EE) & The extent to which the answer extends beyond a basic statement by providing explanation or elaboration. &
\emph{Student Answer:} ``It simulates behavior so developers can test parts before building the full system.'' (EE$=3$, elaborated)\newline
\emph{Student Answer:} ``It simulates behavior.'' (EE$=1$, minimal) &
1 = No elaboration\newline
2 = Limited elaboration\newline
3 = Elaborated explanation \\
\midrule
Fluency / Readability (FR) & The overall clarity and flow of the answer, including grammatical correctness and readability. &
\emph{Student Answer:} ``It help find problem in program.'' (FR$=2$, some errors)\newline
\emph{Student Answer:} ``It help with thing.'' (FR$=1$, difficult to read) &
1 = Difficult to read\newline
2 = Somewhat readable\newline
3 = Fluent and readable \\
\end{longtable}

\begin{longtable}{@{}>{\footnotesize}p{0.15\linewidth} >{\footnotesize}p{0.26\linewidth} >{\footnotesize}p{0.35\linewidth} >{\footnotesize}p{0.18\linewidth}@{}}
\caption{ASAP~2.0 rubric: eight concepts scored on a five-level ordinal scale.}
\label{tab:rubric_asap}\\
\toprule
\textbf{Concept} & \textbf{Definition} & \textbf{Example} & \textbf{Score scale} \\
\midrule
\endfirsthead
\multicolumn{4}{l}{\textit{Table~\ref{tab:rubric_asap} continued from previous page.}}\\
\toprule
\textbf{Concept} & \textbf{Definition} & \textbf{Example} & \textbf{Score scale} \\
\midrule
\endhead
\midrule
\multicolumn{4}{r}{\textit{Continued on next page.}}\\
\endfoot
\bottomrule
\endlastfoot
Thesis Clarity (TC) & The extent to which the essay presents a clear, specific, and coherent central argument or position. &
\emph{Excerpt:} ``My position on driveless cars are bad because they can cause accidents.'' (TC$=4$)\newline
\emph{Excerpt:} ``What if we could tell how all of the people use cars?'' (TC$=1$) &
1 = No clear thesis\newline
2 = Weak/unclear\newline
3 = General thesis\newline
4 = Clear\newline
5 = Clear and specific \\
\midrule
Use of Evidence (UE) & The quality and relevance of reasoning or examples used to support the argument. &
\emph{Excerpt:} ``They can crash because computers can fail and people may get hurt.'' (UE$=3$)\newline
\emph{Excerpt:} ``Driverless cars are bad.'' (UE$=1$) &
1 = No evidence\newline
2 = Minimal support\newline
3 = Some relevant evidence\newline
4 = Adequate support\newline
5 = Strong, well-developed evidence \\
\midrule
Organization \& Coherence (OC) & The extent to which ideas are logically structured and connected across the essay. &
\emph{Excerpt:} ``First, they are unsafe. Next, they cost a lot. Finally, they may replace jobs.'' (OC$=4$)\newline
\emph{Excerpt:} ``Driverless cars are bad. Technology is growing. People like cars.'' (OC$=1$) &
1 = Disorganized\newline
2 = Weak structure\newline
3 = Basic organization\newline
4 = Clear structure\newline
5 = Strong and coherent progression \\
\midrule
Grammar \& Mechanics (GM) & The degree of grammatical accuracy and correctness in sentence construction, spelling, and punctuation. &
\emph{Excerpt:} ``Driverless cars are dangerous because they can fail.'' (GM$=4$)\newline
\emph{Excerpt:} ``Driverless cars is dangerous because it fail.'' (GM$=2$) &
1 = Frequent errors\newline
2 = Many errors\newline
3 = Some errors\newline
4 = Few errors\newline
5 = Virtually error-free \\
\midrule
Vocabulary Appropriateness (VA) & The extent to which vocabulary is used accurately and appropriately for the task. &
\emph{Excerpt:} ``Driverless cars may pose significant safety risks.'' (VA$=4$)\newline
\emph{Excerpt:} ``Cars are very very bad and not good.'' (VA$=1$) &
1 = Very limited vocabulary\newline
2 = Basic/repetitive\newline
3 = Adequate\newline
4 = Appropriate\newline
5 = Precise and varied \\
\midrule
Sentence Variety (SV) & The extent to which the essay demonstrates variation in sentence structures. &
\emph{Excerpt:} ``Although driverless cars reduce human error, they introduce new technological risks.'' (SV$=4$)\newline
\emph{Excerpt:} ``Cars are bad. Cars are unsafe. Cars are not good.'' (SV$=1$) &
1 = Very repetitive\newline
2 = Limited variety\newline
3 = Some variety\newline
4 = Good variety\newline
5 = Sophisticated variation \\
\midrule
Critical Thinking Depth (CTD) & The extent to which the essay demonstrates reasoning, evaluation, or consideration of alternative perspectives. &
\emph{Excerpt:} ``While they may reduce human error, system failures could create new risks.'' (CTD$=4$)\newline
\emph{Excerpt:} ``They are bad.'' (CTD$=1$) &
1 = No analysis\newline
2 = Minimal reasoning\newline
3 = Basic reasoning\newline
4 = Clear reasoning\newline
5 = Insightful analysis \\
\midrule
Fluency / Readability (FR) & The overall clarity, flow, and ease with which the essay can be read and understood. &
\emph{Excerpt:} ``Driverless cars are risky because they rely on technology that may fail unexpectedly.'' (FR$=4$)\newline
\emph{Excerpt:} ``What if we could tell how all of the people use cars\ldots'' (FR$=2$) &
1 = Very difficult to read\newline
2 = Limited clarity\newline
3 = Generally clear\newline
4 = Fluent\newline
5 = Highly fluent and natural \\
\end{longtable}

\begin{longtable}{@{}>{\footnotesize}p{0.15\linewidth} >{\footnotesize}p{0.26\linewidth} >{\footnotesize}p{0.35\linewidth} >{\footnotesize}p{0.18\linewidth}@{}}
\caption{MOCHA rubric: seven concepts scored on a three-level ordinal scale.}
\label{tab:rubric_mocha}\\
\toprule
\textbf{Concept} & \textbf{Definition} & \textbf{Example} & \textbf{Score scale} \\
\midrule
\endfirsthead
\multicolumn{4}{l}{\textit{Table~\ref{tab:rubric_mocha} continued from previous page.}}\\
\toprule
\textbf{Concept} & \textbf{Definition} & \textbf{Example} & \textbf{Score scale} \\
\midrule
\endhead
\midrule
\multicolumn{4}{r}{\textit{Continued on next page.}}\\
\endfoot
\bottomrule
\endlastfoot
Relevance to Question (RQ) & The extent to which the candidate directly addresses the question being asked. &
\emph{Q:} Why might he be wearing a dressing gown?\newline
\emph{Candidate:} ``He is a fan of Vince McMahon.'' (does not address the question; RQ$=1$)\newline
\emph{Q:} Why did they call the fire dept?\newline
\emph{Candidate:} ``because they thought it would start to fire.'' (directly addresses the question; RQ$=3$) &
1 = Irrelevant\newline
2 = Partially relevant\newline
3 = Fully relevant \\
\midrule
Inference Accuracy (IA) & The correctness of reasoning when the candidate requires inference beyond explicitly stated information. &
\emph{Q:} What might be true about Chinese divers?\newline
\emph{Candidate:} ``They are knowledgeable.'' (incorrect inference; IA$=1$)\newline
\emph{Q:} Why did they call the fire dept?\newline
\emph{Candidate:} ``because they thought it would start to fire.'' (reasonable inference; IA$=2$) &
1 = Incorrect inference\newline
2 = Partially correct inference\newline
3 = Correct inference \\
\midrule
Textual Grounding (TG) & The degree to which the candidate is supported by or consistent with the passage. &
\emph{Q:} Why did they call the fire dept?\newline
\emph{Candidate:} ``because they thought it would start to fire.'' (supported by context; TG$=3$)\newline
\emph{Q:} Why might he be wearing a dressing gown?\newline
\emph{Candidate:} ``He is a fan of Vince McMahon.'' (not supported; TG$=1$) &
1 = Not grounded\newline
2 = Partially grounded\newline
3 = Fully grounded \\
\midrule
Coreference Resolution (CR) & The accuracy with which the candidate correctly interprets and resolves references (e.g., pronouns, entities). &
\emph{Q:} How did we get fruit salad?\newline
\emph{Candidate:} ``We got it at our friend's place.'' (partially aligned but unclear attribution; CR$=2$) &
1 = Incorrect resolution\newline
2 = Partially correct / unclear\newline
3 = Correct resolution \\
\midrule
Paraphrasing Quality (PQ) & The extent to which the candidate preserves the meaning of the reference while using different wording. &
\emph{Reference:} ``Because it requires a lot of time and energy.''\newline
\emph{Candidate:} ``it takes a lot of time and energy.'' (accurate paraphrase; PQ$=3$)\newline
\emph{Candidate:} ``it brought a heavy appetite.'' (distorted meaning; PQ$=1$) &
1 = Distorted meaning\newline
2 = Partially accurate\newline
3 = Accurate paraphrase \\
\midrule
Completeness (C) & The degree to which the candidate fully addresses all aspects of the question. &
\emph{Q:} What may have caused the old man to wait?\newline
\emph{Candidate:} ``he was waiting for his relatives.'' (fully answers; C$=3$)\newline
\emph{Q:} Why did I have a fun time?\newline
\emph{Candidate:} ``I was in a good mood\ldots'' (does not fully capture intended reason; C$=2$) &
1 = Incomplete\newline
2 = Partially complete\newline
3 = Complete \\
\midrule
Conciseness \& Clarity (CC) & The extent to which the candidate is clearly expressed without ambiguity or unnecessary wording. &
\emph{Candidate:} ``He was hungry.'' (clear; CC$=3$)\newline
\emph{Candidate:} ``because it's no more, but no more.'' (unclear expression; CC$=1$) &
1 = Unclear / poorly formed\newline
2 = Somewhat clear\newline
3 = Clear and concise \\
\end{longtable}

\emph{Scope.} The three rubrics are English-language and benchmark-specific---short computer-science answers for Mohler, argumentative essays for ASAP~2.0, and narrative reading-comprehension responses for MOCHA---so transferring any of them to a new language, subject domain, or response genre requires re-eliciting the inventory and the level descriptors through the same HITL pipeline described in \S\ref{app:annotation_rubrics}.

\section{Implementation Details.}
\label{app:implementation_details}

The main experiments use the five pretrained backbones shown in Table~\ref{tab:main_results}: BERT, BART, GPT-2, RoBERTa, and T5-Base. For \MethodName{}, we follow the explicit two-stage training paradigm in \S\ref{sec:training}. Stage~I is implemented with the ordinally calibrated concept encoder, which optimizes concept supervision and ordinal calibration. After freezing the Stage~I encoder, Stage~II trains only the latent correction module on top of the fixed concept predictions. Unless otherwise noted, all results are averaged over five random seeds, model selection is based on validation task macro-F1, and we report the corresponding test-set scores.

Table~\ref{tab:impl_shared} summarizes the shared training and architecture settings. In the implementation, the attention temperature corresponds to $\tau$, the concept-supervision weight to $\lambda_c$, the ordinal-calibration weight to $\lambda_r$, the task-loss weight to $\lambda_t$, the denoising-alignment weight to $\lambda_d$, the sparsity weight to $\lambda_s$, and the Cholesky regularizer to $\varepsilon$. Across all five backbones, the encoder hidden size is $d=768$, and the concept-query space matches the backbone hidden size because no separate concept projection dimension is used in the main training path. In Stage~II, the latent task head consumes the corrected concept vector in $\mathbb{R}^{K}$. Table~\ref{tab:impl_keyparams} then summarizes the key tuned parameters, their meanings, and the values considered in the search space; bold values indicate the final optima used in the main experiments.

\begin{table*}[!th]
\centering
\caption{Shared implementation settings for \MethodName{}.}
\label{tab:impl_shared}
\vspace{1mm}
\resizebox{\linewidth}{!}{
\begin{tabular}{lll}
\toprule
\textbf{Component} & \textbf{Description} & \textbf{Value} \\
\midrule
BERT & Pretrained encoder backbone used in the main experiments & \texttt{google-bert/bert-base-uncased} \\
\rowcolor{RowGray}
BART & Pretrained encoder backbone used in the main experiments & \texttt{facebook/bart-base} \\
GPT-2 & Pretrained encoder backbone used in the main experiments & \texttt{openai-community/gpt2} \\
\rowcolor{RowGray}
RoBERTa & Pretrained encoder backbone used in the main experiments & \texttt{FacebookAI/roberta-base} \\
T5-Base & Pretrained encoder backbone used in the main experiments & \texttt{google-t5/t5-base} \\
\rowcolor{RowGray}
Hidden size $d$ & Hidden representation dimension used by the concept encoder and classifiers & $768$ \\
\texttt{max\_len} & Maximum token number of each input sequence & $128 / 256 / \mathbf{512}$ \\
\rowcolor{RowGray}
\texttt{batch\_size} & Number of training instances per optimization step & $8$ \\
\texttt{num\_epochs} & Maximum number of training epochs & $50$ \\
\rowcolor{RowGray}
\texttt{early\_stopping\_patience} & Number of non-improving epochs before early stopping & $3$ \\
\texttt{warmup\_ratio} & Fraction of total optimization steps used for linear warmup & $0.1$ \\
\rowcolor{RowGray}
\texttt{query\_init\_method} & Initialization scheme for the concept query bank & Orthogonal \\
\texttt{num\_heads} & Attention heads per concept query & $1$ \\
\rowcolor{RowGray}
\texttt{query\_lr\_multiplier} & Relative learning-rate multiplier for concept queries & $1$ \\
$\lambda_c$ & Weight of the concept-supervision loss in Stage~I & $1.0$ \\
\rowcolor{RowGray}
$\lambda_t$ & Weight of the task prediction loss in Stage~II & $1.0$ \\
$\varepsilon$ & Diagonal regularizer used to keep the latent precision matrix positive definite & $10^{-4}$ \\
\bottomrule
\end{tabular}
}
\end{table*}

Stage~I tunes the backbone learning rate, attention temperature $\tau$, and ordinal-calibration weight $\lambda_r$ for the ordinally calibrated concept encoder. Stage~II tunes the latent-head learning rate together with the denoising-alignment and sparsity weights $\lambda_d$ and $\lambda_s$ while keeping the Stage~I encoder frozen.

\begin{table*}[!th]
\centering
\caption{Key parameters in \MethodName{} with their meanings and values. Bold values indicate the optima.}
\label{tab:impl_keyparams}
\vspace{1mm}
\resizebox{\linewidth}{!}{
\begin{tabular}{lll}
\toprule
\textbf{Parameter} & \textbf{Description} & \textbf{Value} \\
\midrule
\texttt{Stage~I LR} & Backbone learning rate for the ordinally calibrated concept encoder & \begin{tabular}[c]{@{}l@{}}$1e\text{-}6 / 2e\text{-}6 / 3e\text{-}6 / 5e\text{-}6 / 8e\text{-}6 / \mathbf{1e\text{-}5} / \mathbf{2e\text{-}5}$ \\ $3e\text{-}5 / 5e\text{-}5 / 7e\text{-}5 / 8e\text{-}5 / \mathbf{1e\text{-}4} / 2e\text{-}4 / 5e\text{-}4 / 1e\text{-}3$\end{tabular} \\
\rowcolor{RowGray}
$\tau$ & Attention temperature controlling the sharpness of rubric-aware token attention & $\mathbf{0.25} / 0.5 / \mathbf{0.75} / \mathbf{1.0} / \mathbf{1.25} / 1.5 / 1.75 / 2.0$ \\
$\lambda_r$ & Weight of the ordinal pairwise calibration objective in Stage~I & $0.1 / \mathbf{0.2} / \mathbf{0.3} / \mathbf{0.4} / 0.5 / \mathbf{0.6} / 0.7 / \mathbf{0.8} / \mathbf{1.0}$ \\
\rowcolor{RowGray}
\texttt{Stage~II LR} & Learning rate for the latent correction head on frozen concept predictions & $0.001 / 0.002 / \mathbf{0.005} / 0.01 / \mathbf{0.02} / \mathbf{0.05}$ \\
$\lambda_d$ & Weight of the denoising-alignment loss between corrected concepts and rubric labels & $0.0 / 0.05 / \mathbf{0.1} / \mathbf{0.2} / 0.5$ \\
\rowcolor{RowGray}
$\lambda_s$ & Weight of the sparsity regularizer on latent cross-concept dependencies & $0.0 / \mathbf{0.005} / 0.01 / 0.05 / 0.1$ \\
\bottomrule
\end{tabular}
}
\end{table*}

\subsection{Baseline Details for Main Results.}
\label{app:baseline_details}

\emph{PLM baselines.} The PLM baselines fine-tune five pretrained encoders directly for grade prediction: BERT~\cite{devlin2019bert}, BART~\cite{lewis2020bart}, GPT-2~\cite{radford2019language}, RoBERTa~\cite{liu2019roberta}, and T5-Base~\cite{raffel2020exploring}. These models provide strong task-only comparisons without concept supervision or an interpretable bottleneck.

\emph{LLM baselines.} The prompted LLM baselines evaluate Llama-3-8B-Instruct~\cite{grattafiori2024llama}, Qwen2.5-14B-Instruct~\cite{qwen25}, and Mistral-7B-Instruct-v0.3~\cite{jiang2023mistral} as black-box graders under zero-shot and 3-shot prompting, testing whether general-purpose instruction following can recover benchmark-specific grading behavior without task-specific training.

\emph{CBM baselines.} We compare against four prior transparent baselines. Vanilla CBM~\cite{koh2020concept} is the standard concept-then-task bottleneck model with supervised rubric concepts. C$^3$M~\cite{tan2024interpreting} augments textual CBMs with concept augmentation and concept-level mixup to improve robustness under limited or noisy concept annotations. CT-CBM~\cite{bhan2025towards} expands the textual concept basis automatically rather than relying only on predefined human concepts. CB-LLM~\cite{sun2025concept} adapts the bottleneck idea to large language models by introducing an interpretable concept layer inside the LLM pipeline. \MethodName{} adds rubric-aware concept encoding, ordinal concept calibration, and latent error correction to improve both grading performance and concept reliability.

\section{Extended Experimental Analysis.}
\label{app:extended_experiments}

This appendix expands the main-text empirical sections (\S\ref{sec:main_results}, \S\ref{sec:param_analysis}, \S\ref{sec:human_intervention}) with per-cell numerical detail and dataset-specific failure-mode discussion.

\subsection{Extended Main-Result Analysis.}
\label{app:main_results_extended}

This appendix expands Table~\ref{tab:main_results} with per-class detail that supports the summary claims in \S\ref{sec:main_results}.

\emph{Within-CBM dominance.} Across the three datasets and five backbones, \MethodName{} wins every task-F1 and task-accuracy cell against the other four CBM baselines (15/15 each), and wins concept F1 in 14 of the 15 cells; the lone exception is BART on ASAP 2.0, where CT-CBM reaches C-F1 0.573 vs.\ \MethodName{} at 0.513 despite \MethodName{} still leading in task F1 (0.520 vs.\ 0.467). Improvements over Vanilla CBM, the strongest concept-supervised prior, are consistent across backbones. Taking the median across the five backbones, \MethodName{} gains $+6.3$ / $+1.9$ / $+1.9$ T-F1 points and $+2.3$ / $+3.1$ / $+2.6$ C-F1 points on Mohler / ASAP 2.0 / MOCHA. The largest single-cell gain is on GPT-2 / Mohler (T-F1 0.298$\rightarrow$0.615), where Vanilla CBM's shared encoder is a particularly poor match to the short-answer rubric and the rubric-aware encoder recovers most of the missing signal.

\emph{Black-box PLM head-to-head.} Comparing each PLM directly to the same-backbone \MethodName{} row, \MethodName{} wins T-F1 on 11 of 15 same-backbone cells. The remaining four cells are BERT on ASAP 2.0 (PLM 0.485 vs.\ \MethodName{} 0.466), T5-Base on Mohler (0.564 vs.\ 0.547), and T5-Base on ASAP 2.0 (0.508 vs.\ 0.486). Each of these is overturned by switching to a different backbone: RoBERTa\,+\,\MethodName{} beats every PLM row on every dataset, so there is no configuration in which black-box access is strictly required.

\emph{LLM failure modes.} Mohler exhibits the most severe collapse: zero-shot Llama-3-8B and Qwen2.5-14B reach only 0.158 and 0.070 T-Acc, and Mohler T-F1 never exceeds 0.376 across any prompted configuration. The zero-shot/few-shot asymmetry is also dataset-dependent: 3-shot prompting helps Mistral-7B on Mohler (T-Acc 0.329$\rightarrow$0.399) and MOCHA (0.388$\rightarrow$0.561) but hurts on ASAP 2.0 (0.383$\rightarrow$0.277), consistent with long-essay inputs saturating the instruction-following budget and the demonstrations displacing task-relevant context.

\emph{CB-LLM collapse.} CB-LLM's T-Acc and T-F1 are identical across all five backbone rows within a dataset (0.298 / 0.276 / 0.470 T-Acc and 0.077 / 0.094 / 0.131 T-F1 on Mohler / ASAP 2.0 / MOCHA) because its bottleneck is attached to a frozen LLM and does not consume the tabulated backbone. The constant T-Acc corresponds to near-majority-class prediction on each dataset, and the low T-F1 is consistent with collapse to a single grade---a failure mode \MethodName{} avoids by routing predictions through a supervised rubric bottleneck that must match the annotated concept distribution.

\emph{CT-CBM instability.} CT-CBM's automatic concept-basis expansion couples predictive capacity to the backbone's lexical-prediction quality, and it degrades sharply on GPT-2, with Mohler T-F1 0.110, ASAP T-F1 0.089, and MOCHA T-F1 0.129---well below every other CBM row in the GPT-2 block. \MethodName{} anchors each concept to a learnable query over the backbone's contextual states, which decouples concept quality from autoregressive backbone behavior and aligns with the motivation in \S\ref{sec:race}.

\subsection{Extended Parameter Analysis.}
\label{app:param_analysis_extended}

This appendix expands Fig.~\ref{fig:param_analysis} with per-backbone detail that supports the headline claim in \S\ref{sec:param_analysis}: \MethodName{} is robust to the rubric cardinality $K$ and the encoder hidden dimension $d$, while the Stage~I learning rate is the only hyperparameter whose misconfiguration meaningfully degrades grading.

\emph{Concept-count sweep.} On all three datasets, task F1 rises steeply between $K{=}1$ and $K{\approx}4$ and then plateaus through the full rubric cardinality (8 / 8 / 7 on Mohler / ASAP~2.0 / MOCHA). The rise is largest where a single shared concept is most starved of evidence: on Mohler, GPT-2 climbs from $\sim$0.18 at $K{=}1$ to $\sim$0.51 at $K{=}2$, and T5 from $\sim$0.11 to $\sim$0.54 by $K{=}8$. RoBERTa dominates across $K$ on every dataset, never falling below the other backbones in the saturated regime. The lone non-monotone trajectory is BART on Mohler, which peaks at $K{=}1$ and then partially recovers near $K{=}8$; this is consistent with BART's lower same-backbone score in Table~\ref{tab:main_results} rather than a defect of the sweep itself. Using each benchmark's full rubric cardinality is therefore safe: it never underperforms a smaller $K$ and matches the interpretability target of one concept per rubric dimension.

\emph{Hidden-dimension sweep.} Curves are nearly flat for $d \ge 384$ on ASAP~2.0 and MOCHA across all four backbones, indicating that grading quality saturates well below the default $d{=}768$ used in Table~\ref{tab:impl_shared}. The only meaningful capacity effect appears on Mohler: BART and GPT-2 jump from $\sim$0.31 at $d{=}128$ to $\sim$0.53 at $d{=}384$, plausibly because short-answer responses contain little redundancy for under-parameterized concept heads to exploit. RoBERTa is essentially flat across $d$ on every dataset, suggesting that its pretraining already provides representations rich enough for the rubric-aware queries. The default $d{=}768$ therefore sits comfortably in the saturated regime for every backbone--dataset pair, and there is no need to retune $d$ when changing the encoder.

\emph{Stage~I learning-rate sweep.} Every backbone--dataset pair forms an inverted-U whose peak lies in $[10^{-5}, 10^{-4}]$. RoBERTa peaks slightly earlier (closer to $10^{-5}$) than BART, GPT-2, and T5, whose optima sit nearer $10^{-4}$. Once LR$\ge 10^{-3}$, all backbones collapse to near-zero F1 on Mohler and ASAP~2.0; T5 retains marginal signal on MOCHA at $10^{-3}$ before joining the collapse at $10^{-2}$. This sensitivity is consistent with the two-stage design in \S\ref{sec:training}: Stage~I jointly updates the text encoder $\mathcal{E}$, the concept-query bank $\mathcal{Q}$, and the per-concept classifiers $\{\mathbf{V}_k\}$, so an oversized LR destabilizes rubric-aware token attention before the calibration loss in Eq.~\eqref{eq:loss_rank} can take effect; Stage~II then has no usable concept signal to denoise. The narrow band of bold optima in Table~\ref{tab:impl_keyparams} ($1e{-5}$, $2e{-5}$, $1e{-4}$) directly reflects this peak.

\emph{Implications.} The three sweeps jointly justify two implementation choices in the main experiments. First, we use each benchmark's full rubric cardinality and the default $d{=}768$ across backbones, since both lie inside the saturated regime and incur no degradation. Second, the Stage~I learning-rate search in Table~\ref{tab:impl_keyparams} is restricted to a narrow band around $10^{-5}$--$10^{-4}$, which avoids the catastrophic regime visible in Fig.~\ref{fig:param_analysis} while remaining wide enough to accommodate per-backbone preferences.

\subsection{Extended Human-Intervention Analysis.}
\label{app:intervention_extended}

This appendix expands Fig.~\ref{fig:intervention} with per-dataset detail that supports the headline claim in \S\ref{sec:human_intervention}: editing the rubric-level concept predictions moves the final grade in the direction expected by an educator, so the bottleneck is mechanistically load-bearing rather than decorative.

\emph{Adversarial decline.} Wrong and Random both decline monotonically with the number of intervened concepts $k$. On Mohler ($k\in[0,4]$, $K{=}8$ rubric concepts but only the top-$k$ are intervened), accuracy under Wrong falls from 0.70 to 0.275 (a 42.5-point drop), while Random falls to 0.375 (a 32.5-point drop). On ASAP~2.0 ($k\in[0,8]$), both Wrong and Random fall from 0.64 to $\approx$0.30, with the two trajectories nearly overlapping. On MOCHA ($k\in[0,7]$), both fall from 0.72 to $\approx$0.28, again essentially overlapping. The Wrong--Random gap on Mohler reflects the small concept space and three-level ordinal scale, where adversarial flips deterministically choose the most damaging level while uniform random sometimes lands on the original or an adjacent level. The gap closes on ASAP~2.0 and MOCHA where larger $K$ or more uniform marginals make random labels nearly as harmful as adversarial ones.

\emph{Oracle improvement.} Oracle stays at or above the no-intervention baseline on every dataset and saturates well before $k{=}K$: Mohler 0.70$\to$0.75 (+5 points) by $k{=}2$; ASAP~2.0 0.64$\to$$\approx$0.70 (+6 points) at $k{=}8$, with a steady climb consistent with eight independent rubric dimensions each contributing partial information; MOCHA 0.72$\to$$\approx$0.75 (+3 points) by $k{=}4$. The early saturation indicates that the top few concept corrections capture most of the residual measurement error; correcting additional dimensions yields diminishing returns once the dominant rubric signal is fixed.

\emph{Why Oracle does not match the maximum possible accuracy.} The latent task head $\mathbf{W}$ in Eq.~\eqref{eq:task_head} was fit on the calibrated Stage~I outputs $\boldsymbol{\mu}_{\mathrm{post}}$, not on one-hot rubric labels; the denoising matrix $\mathbf{A} = \boldsymbol{\Sigma}_{\mathrm{post}}\,\mathbf{D}^{-1}$ shrinks every observation toward the prior mean, so even oracle-substituted concepts are partially attenuated before reaching the task head. This is the intended behavior of the measurement-error model in \S\ref{sec:lts}: it trades a small amount of intervention head-room for robustness to noisy concept predictions in the no-intervention regime. The fact that Oracle still rises above None confirms that interventions flow through the bottleneck even after this attenuation.

\emph{Operational implication.} Combining the two sides, a curve where Oracle climbs and Wrong/Random fall is the faithfulness contract behind a usable concept bottleneck: an educator who rewrites a single rubric judgment can move the predicted grade in a predictable direction, and an audit that flags incorrect rubric evidence can be verified by observing the corresponding grade change. Both behaviors are necessary for \MethodName{} to support the inspect/intervene/audit workflow argued for in \S\ref{sec:cbm}.

\subsection{Extended Ablation Analysis.}
\label{app:ablation_extended}

This appendix expands Fig.~\ref{fig:ablation} with per-component role descriptions, a per-dataset walkthrough of the ablation bars, and a gain decomposition that supports the headline claim in \S\ref{sec:ablation}: the three components of \MethodName{} are complementary, so the stack is the only configuration that is uniformly best across the three benchmarks.

\emph{Component roles.} RACE (\S\ref{sec:race}) replaces the shared encoder of Vanilla CBM with a per-rubric concept-query bank, routing each concept head through its own soft-attention pool over the token sequence and thereby isolating dimension-specific textual evidence. OPCC (\S\ref{sec:ocr}) converts the independently trained ordinal heads into a calibrated ranker via a Bradley-Terry-style pairwise log-sigmoid loss on the expected concept scores, preserving the rubric's ordering structure that a plain cross-entropy head discards. LCEC (\S\ref{sec:lts}) treats the calibrated concept scores as error-corrupted observations of a latent concept vector and applies a closed-form Gaussian posterior correction before the grade head, effectively denoising annotator disagreement without breaking the interpretable bottleneck. Each component targets a distinct Vanilla-CBM failure: evidence entanglement, ordinal information loss, and concept-label noise, respectively.

\emph{Per-dataset bars.} Mohler exhibits the cleanest monotone ordering---Vanilla CBM < single removals < double removals < full \MethodName{}---and the largest cumulative lift, consistent with short-answer grading where every rubric dimension is narrowly observable and benefits from all three corrections stacking. ASAP~2.0 shows a flatter profile: Vanilla CBM is already competitive and the single-removal variants cluster tightly, indicating that long essays give the text encoder enough signal that the three modules sharpen rather than rescue performance. MOCHA is dominated by the $-$RACE$-$OPCC outlier where LCEC alone nearly halves T-F1, while the remaining variants occupy a narrow band; the coarse three-level rubric makes LCEC brittle unless RACE and OPCC supply clean, ordinally calibrated upstream scores.

\emph{Component-wise gain decomposition.} Reading single-vs.-double removal contrasts as marginal contributions, RACE carries the largest marginal on Mohler and ASAP~2.0, where rubric-aware attention recovers evidence that a shared encoder conflates; OPCC carries the largest marginal on MOCHA, because it is the component whose absence triggers the $-$RACE$-$OPCC collapse and it rescues LCEC from correcting an uninformative signal; LCEC's marginal is smaller in isolation but additive when RACE and OPCC are already present, adding a consistent final-step lift across all three datasets. This ordering matches the two-stage training paradigm of \S\ref{sec:training}, where Stage~I (RACE\,+\,OPCC) supplies the clean, ordered concept signal that Stage~II (LCEC) denoises.

\emph{Operational implication.} No single component can be dropped without losing uniform dominance across datasets: RACE is load-bearing on short-answer grading, OPCC is load-bearing when the rubric is coarse, and LCEC is load-bearing as the final denoising step. The full \MethodName{} is therefore the recommended configuration for practitioners, with the ablation evidence supporting each component as a default rather than an optional feature.

\subsection{Extended Case Study.}
\label{app:case_study_extended}

This appendix expands Fig.~\ref{fig:case_study_qa} with a component-by-component walkthrough of the decision trace that supports the headline claim in \S\ref{sec:case_study}: every stage of \MethodName{} is individually inspectable on a single Mohler response. The instance is a short answer to ``What is a linked list?'' graded by a BERT-backbone \MethodName{}, with predicted grade $4$ matching the human label at confidence $0.863$.

\emph{RACE token attention.} The colored response spans in Fig.~\ref{fig:case_study_qa} visualize the eight rubric-aware attention heads of \S\ref{sec:race}: ``sequence of nodes'' anchors Factual Correctness (FC) and Depth of Understanding (DU); ``pointers that point to the next or previous nodes'' anchors Concept Coverage (CC) and Example / Elaboration (EE); the remaining surface markers (``called'', ``or two'') anchor Fluency / Readability (FR) and Clarity / Precision (CP). Each rubric dimension pools a dedicated, human-readable token span rather than sharing a single attention map, matching the interpretability goal of the concept-query bank $\mathcal{Q}$.

\emph{OPCC-calibrated ordinal reading.} The predicted ordinal levels $\hat{\mathbf{c}}$ are consistent with Table~\ref{tab:rubric_mohler}: the response is factually correct (FC$=3$), uses appropriate domain terminology (TU$=3$), is fully relevant (R$=3$), clearly expressed (CP$=3$), fluent (FR$=3$), and demonstrates deeper understanding (DU$=3$), but is thin on explicit elaboration (EE$=2$) and partial on coverage (CC$=2$), since the reference answer's ``dynamic allocation'' aspect is not mentioned. The expected scores $\hat{\mathbf{s}}$ preserve this ordering with a graded margin, e.g., FR $(2.907) > $ TU $(2.858) > $ FC $(2.752) > $ CC $(2.276) > $ EE $(1.806)$, which is exactly the ranking structure the pairwise logistic objective in Eq.~\eqref{eq:loss_rank_k} was trained to enforce.

\emph{LCEC shrinkage.} Dividing by $M=3$ gives the normalized observations $\tilde{\mathbf{s}}$, and the latent posterior mean $\boldsymbol{\mu}_{\mathrm{post}}$ applies approximately uniform shrinkage of factor $\sim\!0.51$ to every coordinate (FR $0.954\!\to\!0.485$, R $0.944\!\to\!0.481$, TU $0.929\!\to\!0.472$, EE $0.403\!\to\!0.202$). Under the measurement-error model of \S\ref{sec:lts}, this near-uniform contraction indicates that the learned precision $\boldsymbol{\Omega}$ and noise variances $\sigma_k^2$ are roughly isotropic on this instance, so no single rubric dimension is singled out as anomalously noisy; the posterior nevertheless pulls every observation toward the prior mean, preventing any one concept prediction from dominating the grade head.

\emph{Grade-head attribution.} Because the task head $\mathbf{W}\,\boldsymbol{\mu}_{\mathrm{post}}$ in Eq.~\eqref{eq:task_head} is linear, the bar chart at the bottom of Fig.~\ref{fig:case_study_qa} reads as an exact local decomposition of the grade-$4$ logit: the eight per-concept contributions $W_{4,k}\mu_{\mathrm{post},k}$ sum to $3.32$, which together with the near-zero bias reproduces the reported logit ($3.318$) and softmax confidence ($0.863$). The ranked contributions (R $+0.75$, CP $+0.65$, DU $+0.54$, EE $+0.49$, FC $+0.35$, FR $+0.35$, CC $+0.20$, TU $-0.01$) show that content and reasoning dimensions carry the prediction, while Terminology is effectively inert on this response; the grade is therefore not merely produced by the bottleneck but also explained by it, at the granularity of individual rubric dimensions.

\subsection{Extended Latent Denoising Analysis.}
\label{app:denoising_extended}

This appendix expands Fig.~\ref{fig:denoising_analysis} with the numerical detail that supports the headline claim in \S\ref{sec:denoising_analysis}: the learned latent precision $\boldsymbol{\Omega}$ reorganizes densely correlated rubric labels into a sparse, psychometrically interpretable dependency structure. Both panels are computed on Mohler with the BERT-backbone \MethodName{} instance used in Fig.~\ref{fig:case_study_qa}; the empirical matrix is the Pearson correlation of the expert-validated rubric labels on the training split, and the learned partial correlation matrix is read off from the Stage~II precision $\boldsymbol{\Omega}$ via the standard identity $r_{\mathrm{partial},ij} = -\Omega_{ij}/\sqrt{\Omega_{ii}\Omega_{jj}}$.

\emph{Dense empirical structure.} Among the six content and reasoning dimensions (FC, CC, TU, CP, R, DU), every pairwise correlation falls in $[0.52, 0.89]$, with the strongest links concentrated around Depth of Understanding and the factual core: CC--DU=0.89, FC--DU=0.85, FC--CP=0.81, CP--DU=0.78, FC--CC=0.76, FC--TU=0.76. In contrast, the two surface dimensions (EE, FR) are only weakly coupled to the content block ($|r|\le 0.32$ with any content concept) and near-decoupled from each other (EE--FR=$-$0.04). This dense content-block, loose-surface pattern reflects how rubric evaluators co-assess reasoning and factuality: a response that is factually correct and covers the key ideas tends also to score high on depth, clarity, and terminology, while elaboration and fluency vary along largely orthogonal axes.

\emph{Sparsified learned structure.} The learned partial correlation matrix is markedly sparser: a majority of off-diagonal entries satisfy $|r_{\mathrm{partial}}|<0.3$, and the dense content block collapses into a handful of residual links. Using $|r_{\mathrm{partial}}|<0.1$ as a near-zero threshold, roughly a third of off-diagonal entries are effectively zeroed out in the learned matrix, whereas no off-diagonal of the empirical matrix falls below that threshold. This contraction is the statistical signature of the sparsity penalty $\mathcal{L}_{\mathrm{spa}}$ in Eq.~\eqref{eq:loss_spa}, which shrinks off-diagonal entries of the Cholesky factor $\mathbf{L}$ unless the data support a stronger conditional dependency. Under the MMSE correction in Eq.~\eqref{eq:mu_post}, information sharing among concepts is therefore routed through a small number of directed channels rather than spread uniformly across the rubric.

\emph{Interpretable residuals.} Three families of surviving entries remain. First, residual positive links inside the content block (FC--CC=0.28, FC--DU=0.28, CC--DU=0.27) indicate that factuality, coverage, and depth retain genuine conditional coupling once the remaining rubric dimensions are controlled for. Second, the R--FR partial of $0.36$ links relevance to readability, capturing the observation that off-topic responses tend also to be less fluent. Third, a triad of negative suppressors emerges once shared variance is removed: DU--EE=$-$0.31, DU--FR=$-$0.29, and most strikingly EE--FR=$-$0.63. These signs are rubric-consistent: given a fixed level of content and reasoning, a response that adds more elaboration (EE) trades off against surface fluency (FR), and deeper reasoning (DU) similarly trades against polished-but-shallow expression. This depth-versus-surface tension is invisible in the raw correlation matrix but recovered by the latent precision.

\emph{Operational implication.} A diagonal-$\boldsymbol{\Omega}$ baseline would treat each concept as independently noisy and apply per-concept shrinkage only, reducing to the Spearman reliability case highlighted in the remark after Proposition~\ref{prop:mmse}. The learned dense-but-sparse $\boldsymbol{\Omega}$ instead propagates evidence from reliably measured concepts to noisily measured ones along the residual graph above. Combined with the human-intervention behavior in \S\ref{sec:human_intervention}, this explains why oracle substitutions on a few dominant Mohler content concepts already saturate the intervention curve: once the content block is corrected, the sparse partial-correlation graph transports the correction to the remaining dimensions through a small number of high-magnitude edges.

\section{Use of Generative AI}
To enhance clarity and readability, we utilized the GPT-5.4 model exclusively as a language polishing tool. Its role was confined to proofreading, grammatical correction, and stylistic refinement---functions analogous to those provided by traditional grammar checkers and dictionaries. This tool did not contribute to the generation of new scientific content or ideas, and its usage is consistent with standard practices for manuscript preparation.

\end{document}